\documentclass[11pt,a4paper]{article}

\usepackage[margin=25mm]{geometry}
\usepackage[T1]{fontenc}
\usepackage[utf8]{inputenc}
\usepackage{lmodern}
\usepackage[expansion=false]{microtype}
\usepackage{amsmath,amssymb,amsthm}
\usepackage[notrig]{physics}

\usepackage{flafter}
\usepackage{tikz}
\usetikzlibrary{arrows.meta}
\usepackage{needspace}
\definecolor{figblue}{HTML}{24637A}
\definecolor{figorange}{HTML}{B76A31}
\definecolor{figink}{HTML}{303B43}
\definecolor{figgray}{HTML}{6B747B}
\tikzset{
  paper box/.style={draw=figink,line width=0.6pt,
    minimum height=9mm,align=center,font=\small,fill=white,inner sep=5pt},
  paper arrow/.style={-{Latex[length=1.6mm,width=1.1mm]},draw=figink},
  paper classical/.style={double,double distance=1pt,line width=0.45pt},
  paper label/.style={font=\small,align=center,text=figink},
  paper panel/.style={font=\small\bfseries,anchor=west,text=figink}
}

\usepackage[backend=biber,style=numeric-comp,sorting=none,maxbibnames=99]{biblatex}
\usepackage{hyperref}
\hypersetup{
  hidelinks,
  pdftitle={Certifying Measurement Incompatibility under Bounded Classical Communication},
  pdfauthor={Oxana Shaya}
}
\newtheorem{theorem}{Theorem}[section]
\newtheorem{proposition}[theorem]{Proposition}
\newtheorem{lemma}[theorem]{Lemma}
\newtheorem{corollary}[theorem]{Corollary}
\newcommand{\recalledname}{}
\newtheorem*{recalledstatement}{\recalledname}
\newenvironment{recalled}[3]
 {\renewcommand{\recalledname}{#1~\ref{#2}}\begin{recalledstatement}[#3]}
 {\end{recalledstatement}}
\theoremstyle{definition}
\newtheorem{assumption}{Assumption}
\theoremstyle{plain}

\newcommand{\Acup}{\Lambda^{\cup}}
\newcommand{\id}{I}

\title{Certifying Measurement Incompatibility\\
under Bounded Classical Communication}
\author{Oxana Shaya\\[0.4em]
\small Institute for Information Processing (tnt/L3S)\\
\small Leibniz Universit\"at Hannover, Germany}
\date{}
\begin{document}
\maketitle
\begin{abstract}
Measurement incompatibility enables quantum advantages in communication
and state discrimination. It is also necessary for Bell nonlocality, which is used in device-independent protocols for quantum key distribution and private randomness generation. Classical communication can influence
the receiver's outcome and must therefore be accounted for when certifying
measurement incompatibility. We introduce union maximal leakage, which quantifies the
information needed for a common classical explanation of all observed
probabilities and is computed from them by a linear program. If a message of at
most \(k\) bits precedes the receiver's independent
measurement choice, compatible measurements for each message give
union maximal leakage at most \(k\). This bound holds with any
input-independent nonsignalling shared resource. In quantum theory, a violation
also implies that the initially shared state is entangled. Conversely,
union maximal leakage determines the minimum classical communication needed to reproduce
the observations with compatible measurements and arbitrary nonsignalling
assistance. For a fixed shared quantum state and receiver measurements,
a violation is possible exactly when, for some message value,
the receiver's measurements are Bell-nonlocal against two-outcome sender
measurements on that state.  For binary receiver outcomes, we show that a violation also certifies randomness private from the sender, even when the receiver’s measurement depends on the sender’s message and the sender retains quantum side information. We quantify this
privacy through an explicit conditional entropy bound and establish
composably secure randomness extraction in sequential protocols with
device memory.
\end{abstract}

\section{Introduction}
\label{sec:introduction}

Measurements are compatible if a single parent measurement, followed by
suitable classical processing of its outcome, reproduces the statistics
of each of them for every input state. Incompatibility enables advantages in
communication \cite{PR_Saha2023} and state discrimination
\cite{PR_Carmeli2019,PR_Skrzypczyk2019}, and is necessary for Bell
nonlocality \cite{PR_Wolf2009,PR_Guhne2023}. Applications based on Bell
violations, including device-independent quantum key distribution
\cite{PR_Acin2007} and private randomness generation
\cite{PR_Pironio2010Randomness}, therefore rely on incompatible
measurements.

When the measurement operators are known, compatibility can be determined by testing whether the measurements admit a common parent measurement. For projective measurements, this is equivalent to their projectors commuting pairwise.
Different states and measurements can produce the same observed probabilities.
In a Bell experiment with
independent measurement choices and no communication, compatible measurements
on either side yield a local hidden-variable model. A Bell inequality
violation therefore rules out compatibility
\cite[Sec.~IV.A]{PR_Guhne2023}. If a classical message is allowed,
however, the receiver can learn about the sender's input and use that
information to select a measurement or produce an output. The local
bounds of Bell inequalities need no longer hold, even with compatible
measurements.

Measurement incompatibility has already been tested in entanglement-assisted random-access experiments through the probability of correctly recovering a requested input bit \cite{PR_Xiao2021EARAC,PR_Zhang2026Instruments}. Here we approach the problem through the information
required to explain all the receiver's measurement statistics together.
The question is whether a common classical explanation can reproduce
the observations within the communication allowed in the experiment.
The resulting test includes the standard random-access certificate with
two input bits and one communicated bit: with uniformly chosen inputs,
every violation of its compatibility bound
is also detected by our criterion (Appendix~\ref{app:basic}).

We consider the one-way experiment in Fig.~\ref{fig:scenario}.
The sender and receiver share a resource that is independent of their
inputs and obeys nonsignalling, so it cannot by itself transmit
information between them. On receiving an input \(x\), the sender
sends a classical message \(m\in\mathcal M\). The receiver then
independently chooses a measurement setting \(y\) and reports an outcome
\(b\), with conditional probabilities \(p(b\mid x,y)\).
The receiver's measurement may depend on \(m\), and the message is the
only communication. 

\begin{figure}[ht]
\centering
\begin{tikzpicture}[x=1cm,y=1cm]
 \node[paper box,minimum width=23mm] (sender) at (3,0.7) {sender};
 \node[paper box,minimum width=23mm] (receiver) at (10,0.7) {receiver};
 \node[paper box,minimum width=28mm] (shared) at (6.5,-1.15)
       {shared resource};
 \draw[paper arrow] (shared.west)--(3,-1.15)--(sender.south);
 \draw[paper arrow] (shared.east)--(10,-1.15)--(receiver.south);
 \node[paper label] (input) at (0.6,0.7) {\(x\)};
 \draw[paper classical,paper arrow] (input.east)--(sender.west);
 \draw[paper classical,paper arrow] (sender.east)--(receiver.west)
       node[midway,above=5pt,paper label] {message \(m\in\mathcal M\)};
 \node[paper label] (setting) at (10,2) {\(y\)};
 \draw[paper classical,paper arrow] (setting.south)--(receiver.north);
 \draw[paper classical,paper arrow] (receiver.east)--(12.45,0.7)
       node[right,paper label] {\(b\)};
\end{tikzpicture}
\caption{Communication scenario. The inputs \(x\) and \(y\) are independent
of each other and of the shared resource. The receiver chooses \(y\)
and obtains \(b\) after receiving the message \(m\).
Double lines carry classical information.
The message may be public and is the only communication. For \(k\)
fixed-length bits, \(\abs{\mathcal M}\le2^k\).}
\label{fig:scenario}
\end{figure}

In this setting, can we certify from the observed probabilities that the receiver’s measurements are incompatible?
The parent-measurement description suggests how to answer this question.
If the receiver's measurements are compatible for each message, the
message and the outcome of the corresponding parent measurement provide
a common classical explanation of all observed probabilities. Such an explanation always
exists if it is allowed to encode the full sender input. The question is
how little information about \(x\) it needs to carry, and whether this
exceeds what the permitted communication can supply. We quantify this
information by maximal leakage \cite{PR_Issa2020} and minimize over all
classical simulations reproducing the observed probabilities. Following
the union-information construction of Ref.~\cite{PR_Kolchinsky2022}, we
call the resulting quantity \emph{union maximal leakage}.

Writing \(\log_2\Acup(p)\) for this leakage, we prove that compatible
receiver measurements satisfy
\begin{equation}
 \Acup(p)\le\abs{\mathcal M}\le2^k
 \label{intro:bound}
\end{equation}
for a message of at most \(k\) fixed-length bits. A violation rules out
compatibility for at least one message value. The bound holds in any
nonsignalling operational theory and, in quantum theory, a violation also
implies that the initial shared state is entangled. A linear program
computes \(\Acup(p)\) directly from the observed probabilities.
The same quantity also determines how much classical communication a
compatible explanation needs when the nonsignalling resource may be
chosen freely. For a fixed quantum state and receiver measurements,
the question is different: can any choice of sender operations make
the test detect incompatibility? We characterize this possibility
through Bell nonlocality. We then investigate what a violation says
about privacy when the sender retains quantum information correlated
with the receiver's output.

\Needspace{8\baselineskip}
Section~\ref{sec:test} defines union maximal leakage, derives the
compatibility bound and its converse, and gives the linear program.
Section~\ref{sec:frontier} characterizes when a fixed shared state and
receiver measurements can violate the bound.
Section~\ref{sec:privacy} applies the test to private randomness,
first bounding the receiver's conditional output entropy and then
establishing secure extraction over repeated rounds.
Section~\ref{sec:outlook} concludes with the remaining questions.

\subsection{Relation to prior work}
\label{sec:priorwork}
Information-theoretic incompatibility tests already use the fact that
a parent measurement must be at least as informative as each measurement
obtained from it by classical processing. Zhu's criteria
\cite{PR_Zhu2015} compare Fisher information about quantum-state
parameters with bounds from quantum estimation theory. Their evaluation
uses a specified state-estimation model or, in an equivalent formulation,
the measurement operators. Related criteria use properties of measurement
operators \cite{PR_HJN2022} or evaluate information measures on conditional
quantum states, measurements and instruments \cite{PR_Lee2026}.
Our test addresses experiments in which these quantum descriptions are
unavailable but the classical communication can be bounded. The sender's
input \(x\) is a classical label and need not parameterize a known family
of quantum states. Under our experimental assumptions, nonsignalling and
the message bound constrain the information that compatible receiver
measurements can reveal about \(x\). This gives a certificate evaluated
directly from \(p(b\mid x,y)\), without reconstructing the state or
measurements, even when the receiver uses the message to choose its
measurement.

Our optimization has a more direct
connection to Tavakoli \emph{et al.}~\cite{PR_Tavakoli2022}, who study
prepare-and-measure experiments with a bound on the receiver's optimal
probability of guessing the sender's input. For \(N\)
uniformly distributed inputs, minimizing this probability over classical
models that reproduce \(p\) gives \(\Acup(p)/N\). Their linear program
decides whether such a model exists for a specified guessing-probability
bound \cite[Sec.~3.2]{PR_Tavakoli2022}. Our formulation uses fewer variables by optimizing directly over
joint outcome distributions. To turn this quantity into an incompatibility test, we
use the communication model: nonsignalling and the number of possible
messages imply the information bound, so the guessing probability need
not be assumed or estimated separately.

The interpretation in terms of communication depends on which shared
resources are allowed. With shared randomness, the required number of
classical message values is studied through signalling dimension
\cite{PR_DallArno2017,PR_DoolittleChitambar2021}. For a single receiver
setting, \(\lceil\Acup(p)\rceil\) is the maximum-likelihood lower bound
on this number \cite{PR_DoolittleChitambar2021}. With arbitrary
nonsignalling assistance and compatible receiver measurements,
Theorem~\ref{thm:communication}(ii) makes the bound attainable, including
when there are several settings. Without the compatibility requirement,
the simulation criterion bounds the communication value of each receiver
setting separately \cite[Result~3]{PR_Nogueira2026NS}. Compatibility
requires a common simulation of all settings within the same communication
bound. Here joint outcome distributions exist for every sender input, but the
information they carry about that input is restricted. Entropic
marginal-consistency tests instead constrain the existence of joint
distributions themselves \cite{PR_FritzChaves2013}.

For incompatibility certification in the communicating experiment,
entanglement-assisted random-access tests are the closest comparison
\cite{PR_Xiao2021EARAC,PR_Zhang2026Instruments}. The correlations between
the sender and the first receiver in these protocols fit our quantum
setting with one classical bit. Their full sequential protocols also characterize
measurement instruments, which is a separate task. Our linear program includes the
standard two-bit random-access certificate with uniformly chosen
inputs, while allowing the full observed distribution to determine
the linear certificate. Bell experiments also permit tests of more general
compatibility structures through linear and semidefinite optimization
\cite{PR_Quintino2019Structures,PR_Chen2021Incompatibility}.
When the sender inputs, sender measurements and receiver output
processing may be chosen as in Theorem~\ref{thm:operationalfrontier},
Bell nonlocality against two-outcome sender measurements determines
exactly which fixed quantum resources can violate our test.
Known Bell-nonlocal families \cite{PR_Bene2018} therefore give detection
beyond pairs.

\Needspace{3\baselineskip}
The communication restriction also determines how the test relates to
other experimental approaches. These include tests with trusted quantum
inputs \cite{PR_Guerini2019}, or tests that restrict
the dimension or information capacity of a transmitted quantum system
\cite{PR_Saha2023,PR_Carmeli2020RAC,PR_HeinosaariLeppajarvi2022}.
These assumptions concern the quantum systems supplied or transmitted
in those experiments, whereas our main setting bounds classical
messages. Our extension to
quantum messages connects the two settings. For a fixed measurement,
Frenkel and Weiner show that the statistics of \(d\)-level quantum
states can be simulated with \(d\) classical message values and shared
randomness \cite{PR_FrenkelWeiner2015}. Saha \emph{et al.}~\cite{PR_Saha2023}
show that, without preshared entanglement, compatible measurements on a
transmitted \(d\)-dimensional system with shared randomness yield exactly
the observations admitting such a classical simulation. In this setting,
our condition \(\Acup(p)>d\) is a sufficient test, whereas their result
characterizes all observations obtainable with compatible measurements
under those assumptions.

Privacy against another participant with quantum memory is already
the goal of blind randomness generation \cite{PR_MillerShi2017}, and generalised entropy
accumulation establishes security over repeated rounds
\cite[Sec.~5.1]{PR_Metger2024}. Communication also appears in randomness
protocols where one party sends its measurement outcome to select the
other party's measurement. These protocols certify the parties' joint
outputs against an external adversary \cite{PR_Agresti2020}.
Randomness certification with classical communication is also studied in
entanglement-assisted random-access protocols
\cite[Sec.~V]{PR_Xiao2021EARAC}. Measurement-incompatibility
witnesses have also been used to bound randomness with trusted quantum
inputs \cite{PR_Schloesser2026Randomness}. In our application the sender
both communicates with the receiver and retains quantum information,
while only the receiver's output is to be kept private. Our additional
result is the explicit conditional entropy bound obtained from the
union-leakage violation in this setting. Generalised entropy accumulation
and privacy amplification then convert this bound into composable
security for sequential protocols with device memory.

\section{Union maximal leakage and the incompatibility test}
\label{sec:test}

\subsection{Classical simulations and union maximal leakage}
\label{sec:notation}
All input and outcome alphabets are finite and nonempty. The sender's input is
\(x\in\{1,\ldots,N\}\), the receiver's measurement setting is
\(y\in\{1,\ldots,n\}\), and the reported outcome is
\(b\in\mathcal B_y\). Their conditional probabilities satisfy
\(p(b\mid x,y)\ge0\) and \(\sum_b p(b\mid x,y)=1\).
A classical channel is a conditional distribution normalized over its
output for every input. We use \(q\) for a simulation channel and \(\mathcal N\)
for a physical communication channel. The symbol \(\abs{\mathcal S}\)
denotes the cardinality of a finite set. For sets, \(\prod_{y=1}^n\mathcal B_y\)
denotes the Cartesian product \(\mathcal B_1\times\cdots\times\mathcal B_n\).
Logarithms are to base two. For \(u\in\mathbb R\), we write
\([u]_+:=\max\{u,0\}\) for its positive part.
For classical random variables, uppercase letters denote the variables
and lowercase letters their values, as in \(X=x\).
Party subscripts \(A\) and \(B\) identify the sender and receiver,
respectively.
In quantum formulas, \(\rho\) is a normalized state, \(\sigma\) may be
subnormalized, \(A\succeq B\) means that \(A-B\) is positive
semidefinite, and \(\id\) is the identity. The trace norm is
\(\norm{A}_1=\tr\sqrt{A^\dagger A}\). A positive operator-valued
measure (POVM) consists of positive effects summing to \(\id\).

A common classical simulation encodes the input value \(x\) in a finite
random variable \(Z\), with values \(z\in\mathcal Z\). Given \(z\)
and the chosen measurement setting \(y\), a response channel generates
\(b\):
\begin{equation}
 p(b\mid x,y)=\sum_{z\in\mathcal Z}q(z\mid x)p_B(b\mid y,z).
 \label{eq:simulation}
\end{equation}
The alphabet \(\mathcal Z\) is unrestricted. To quantify what \(Z\)
reveals about the input, we use a uniform reference distribution for \(X\).
Without observing \(Z\), the optimal probability of guessing \(X\)
is \(1/N\). For a given output \(z\), an optimal guess chooses an input
maximizing \(q(z\mid x)\). Summing these maxima and dividing by \(N\)
gives the optimal guessing probability \(P_g(X\mid Z)\). Their sum is
therefore the multiplicative increase in the optimal guessing probability
due to observing \(Z\):
\begin{equation}
 \Lambda(q):=\sum_z\max_x q(z\mid x)
 =N P_g(X\mid Z)=2^{\mathcal L_{\max}(X\to Z)}.
 \label{eq:guessinggain}
\end{equation}
Here \(\mathcal L_{\max}(X\to Z)\) is maximal leakage
\cite{PR_Issa2020}, and \(\Lambda(q)\) is also known as the classical
communication value \cite{PR_Chitambar2023CV}.
The uniform prior relates \(\Lambda(q)\) to the guessing probability
\cite{PR_Tavakoli2022}, without restricting how experimental inputs
are sampled.

Given \(z\), the response channel specifies an outcome distribution
for every receiver setting. Sampling from these distributions gives a
joint outcome \(s=(s_1,\ldots,s_n)\in\Omega:=\prod_{y=1}^n\mathcal B_y\).
The simulation reports the component \(b=s_y\) for the chosen setting \(y\).
The conditional distribution \(q(s\mid x)\) must then have marginals
\(p(b\mid x,y)\).
Different joint distributions can reproduce these same observations while
revealing different amounts of information about the input. To find the
least information required by a common explanation, we minimize
\(\Lambda(q)\) over all such distributions. Writing \(p\) for the full
family of observed probabilities, this gives
\begin{equation}
\begin{aligned}
 \mathcal C(p)&:=\left\{q\in\mathbb R_{\ge0}^{N\times\abs{\Omega}}:
  \sum_{s\in\Omega:\,s_y=b}q(s\mid x)=p(b\mid x,y)
  \quad\forall x,y,b\right\},\\
 \Acup(p)&:=\min_{q\in\mathcal C(p)}\sum_{s\in\Omega}\max_x q(s\mid x).
 \label{eq:operationalcosts}
\end{aligned}
\end{equation}
For each input \(x\), \(q(\cdot\mid x)\) is a probability distribution:
summing the marginal equalities over \(b\) gives
\(\sum_{s\in\Omega}q(s\mid x)=1\).
One admissible joint distribution is
\(q(s\mid x)=\prod_y p(s_y\mid x,y)\), so \(\mathcal C(p)\)
is nonempty. The set \(\mathcal C(p)\) is closed and bounded: its constraints
are linear, and every \(q\in\mathcal C(p)\) satisfies
\(0\le q(s\mid x)\le1\). It is therefore compact. Since the objective is
continuous, the minimum is attained.
We call \(\log_2\Acup(p)\) the
\emph{union maximal leakage} and use its exponential throughout.
Allowing stochastic responses in \eqref{eq:simulation} gives the same
minimum, as Proposition~\ref{prop:lp} establishes. Unlike the communication
value of a quantum channel \cite{PR_Chitambar2023CV}, this quantity
does not optimize over physical preparations or decoding measurements.

\subsection{Compatibility under communication constraints}
Let \(B\) comprise everything the receiver holds after communication and before
the measurement setting \(y\) is chosen, including the received message, ancillary
systems, and accessible shared randomness. A measurement \(\mathsf G\)
on \(B\) produces a guess \(\widehat X\in\{1,\ldots,N\}\) of
the sender's input. For uniformly distributed \(X\), the optimal guessing
probability is
\begin{equation}
 P_g(X\mid B):=\sup_{\mathsf G}\Pr_{\mathsf G}(\widehat X=X),
 \label{eq:guessing}
\end{equation}
where the supremum runs over all allowed measurements on \(B\).
Measurements are compatible, or jointly measurable,
if they are classical postprocessings of one parent measurement. If
the receiver's measurements are compatible for each message value, the message
and the outcome of the corresponding parent measurement together define
a parent on \(B\).
Its outcome is a possible simulation output in \eqref{eq:simulation}.
Guessing \(X\) from that outcome cannot be more successful than the
optimal measurement on \(B\), which gives the information bound below.

We first state the causal assumptions for classical communication.
\begin{assumption}[One-way communication]
\label{ass:communication}
The sender's input \(x\) and the receiver's measurement setting \(y\) are independent
of each other and of the initial shared resource, including any shared
randomness. After receiving \(x\), the sender sends one classical message
\(m\in\mathcal M\), where \(\mathcal M\) is a finite nonempty message alphabet.
This is the only communication after the shared resource is distributed.
The receiver chooses \(y\) after receiving \(m\), independently of \((x,m)\)
and all systems held by the devices, and does not communicate it to the sender.
The receiver then performs the corresponding measurement, which may
depend on \(m\).
\end{assumption}
The message may be public and may be included in the reported outcome.
The alphabet bound includes all distinguishable features of the
communication. For \(k\) fixed-length bits,
\(\abs{\mathcal M}\le2^k\), with equality if all bit strings are
allowed. The message alphabet need not have a power-of-two size: three
possible messages give the threshold three, even though their binary
encoding requires two bits. Nonsignalling means that, before communication, each
party's local outcome distribution is independent of the other party's
measurement choice.

We also consider quantum communication, in which the sender sends the receiver a
quantum system instead of a classical message. Its Hilbert space has
dimension \(d\). This bound applies only to the transmitted system,
not to the initially shared systems. The input independence, timing
and exclusion of other communication in
Assumption~\ref{ass:communication} are unchanged. All states and
operations in this case are quantum.

\begin{theorem}[Compatibility bound and simulation cost]
\label{thm:communication}
\begin{enumerate}
\item[(i)] In any nonsignalling operational theory,
compatible receiver measurements satisfy
\begin{equation}
 \Acup(p)\le N P_g(X\mid B).
 \label{eq:informationbound}
\end{equation}
For the classical and quantum communication scenarios specified above,
\begin{equation}
 N P_g(X\mid B)\le
 \begin{cases}
 \abs{\mathcal M},&\text{classical message},\\
 d,&\text{quantum system, separable shared state},\\
 d^2,&\text{quantum system, arbitrary shared state}.
 \end{cases}
 \label{eq:budgets}
\end{equation}
The classical case allows arbitrary nonsignalling assistance.
\item[(ii)] For every conditional distribution
\(p\), the least message-alphabet size of a realization satisfying
Assumption~\ref{ass:communication}, with some nonsignalling shared
resource and compatible receiver measurements, is \(\lceil\Acup(p)\rceil\).
\end{enumerate}
\end{theorem}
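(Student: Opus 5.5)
For (i), the first inequality \eqref{eq:informationbound}, I will build an explicit element of \(\mathcal C(p)\) from the physical realization. Suppose the receiver's measurements are compatible for each message value \(m\). Then for each \(m\) there is a parent measurement on \(B\) with outcome \(\lambda\) and postprocessings \(p(b\mid y,m,\lambda)\). Set \(z=(m,\lambda)\). Let \(q(z\mid x)\) be the probability of message \(m\) and parent outcome \(\lambda\) given input \(x\); this uses that \(y\) is chosen independently of \((x,m)\) and of the devices, so the parent can be measured before \(y\) is fixed. Let \(p_B(b\mid y,z)\) be the postprocessing.

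From this simulation, sample each component \(s_y\) independently from \(p_B(\cdot\mid y,z)\) given \(z\), which gives \(q(s\mid x)=\sum_z q(z\mid x)\prod_y p_B(s_y\mid y,z)\in\mathcal C(p)\). Then
\[
\sum_s\max_x q(s\mid x)\le\sum_s\sum_z\prod_y p_B(s_y\mid y,z)\max_x q(z\mid x)=\sum_z\max_x q(z\mid x).
\]
This is the data-processing step that Proposition~\ref{prop:lp} also records. The quantity \(\sum_z\max_x q(z\mid x)\) equals \(N\) times the success of guessing \(X\) from the outcome of one measurement on \(B\), namely the parent applied after reading \(m\). It is therefore at most \(N P_g(X\mid B)\).

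For the budgets \eqref{eq:budgets}, write the guessing success as \(\frac1N\sum_x\Pr(\widehat X=x\mid x)\). In the classical case, \(B=(M,R)\) with \(R\) the receiver's share of the nonsignalling resource, and \(R\) is independent of \(x\). A guessing strategy maps \(m\) to a local measurement on \(R\) with outcome \(\hat x\). The sender's action on input \(x\) yields \(m\) jointly with its own local operation. Then \(\sum_x\Pr(\hat X=x,\,\cdot\mid x)\le\sum_m\sum_x\Pr(\hat X=x\mid\text{receiver measurement }m)\), where the last probability is the receiver's marginal, which by nonsignalling does not depend on the sender's input or choice. Summing over \(x\) gives at most \(1\) per \(m\), so the total is at most \(\abs{\mathcal M}\). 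In the quantum case I plan to reduce to known results. For a separable shared state, the decomposition into product states lets one absorb the receiver's part into classical randomness; a \(d\)-dimensional system then gives \(N P_g\le d\), by the bound \(\sum_x\tr(G_x\rho_x)\le\tr\sum_x G_x\cdot\max\) applied on a \(d\)-dimensional system via the min-entropy bound \(H_{\min}(X\mid Q R)\ge H_{\min}(X\mid R)-\log d\). For arbitrary entanglement the same chain-rule bound loses \(2\log d\), which gives \(d^2\) and is consistent with dense coding.

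For (ii), the lower bound follows from (i): a realization with \(\abs{\mathcal M}\) messages gives \(\Acup(p)\le\abs{\mathcal M}\), hence \(\abs{\mathcal M}\ge\lceil\Acup(p)\rceil\). For achievability, take an optimal \(q\) and set \(K=\lceil\Acup(p)\rceil\). The goal is a nonsignalling box shared by the sender and receiver so that, after the sender sends \(m\in\{1,\dots,K\}\), the receiver obtains \(s\) distributed as \(q(s\mid x)\). The receiver then outputs \(s_y\), which is trivially a compatible measurement, with parent outcome \(s\). I would use the known result that a channel \(q\) with \(\Lambda(q)\le K\) can be simulated with \(K\) messages plus nonsignalling assistance (the nonsignalling simulation criterion, e.g.\ \cite{PR_Nogueira2026NS}). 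Concretely, I will define a box \(P(m,s\mid x,m')\) in which the sender inputs \(x\) and outputs \(m\), and the receiver inputs the received \(m'\) and outputs \(s\). It is given by \(P(m,s\mid x,m')=q(s\mid x)/K\) when \(m=m'\), plus a correction term spread uniformly when \(m\ne m'\). The correction is \((\max_{x'}q(s\mid x')-q(s\mid x))/(K(K-1))\)-type weights, normalized using \(\sum_s\max_x q\le K\). The main obstacle is this step: checking positivity and that the receiver's marginal \(\sum_m P\) is independent of \(x\), which is where \(\Lambda(q)\le K\) enters. The case \(K=1\) is trivial, since then \(q\) is independent of \(x\).
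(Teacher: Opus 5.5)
Your proof is correct, and its skeleton is the paper's. The first inequality in (i) and the classical-message budget use the same arguments; routing through $z=(m,\lambda)$ and \eqref{eq:dataprocessing}, rather than building the $s$-valued measurement directly, is cosmetic. The lower bound in (ii) also comes from (i) in both.

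Two sub-steps differ. For the quantum budgets, the paper proves $\tau^x_{CB_0}\preceq d\,\id_C\otimes\rho_{B_0}$ from the Schmidt decomposition, and $\tau^x_{CB_0}\preceq\id_C\otimes\rho_{B_0}$ for separable $\tau^x_{CB_0}$, then sums against the guessing POVM. Your min-entropy chain rules carry the same content, since the $-2\log d$ rule is itself derived from $\rho_{QR}\preceq d\,\id_Q\otimes\rho_R$. Note that the $-\log d$ rule fails for entangled $R$ (dense coding reaches $d^2$), so your reduction to a classical label $\lambda$ is essential. After that reduction, $\sum_x\tr\bigl(G^\lambda_x(\sigma^{x,\lambda}_C\otimes\beta_\lambda)\bigr)\le\tr(\id_C\otimes\beta_\lambda)=d$ gives the bound directly. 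For separable states that are only trace-norm limits of finite product mixtures, you also need the paper's continuity step.

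For achievability in (ii), the paper cites Cubitt \emph{et al.}'s Theorem~16, whereas you write out the box. Your box is the standard construction behind that theorem, and it works. With $t_s=\max_xq(s\mid x)$, $T=\sum_st_s\le K$ and any fixed distribution $\mu$, take $P(m,s\mid x,m')=\bigl(t_s-q(s\mid x)+(K-T)\mu(s)\bigr)/(K(K-1))$ for $m\ne m'$. The receiver's marginal is then $\bigl(t_s+(K-T)\mu(s)\bigr)/K$, which is independent of $x$ and normalized. You should also verify the direction you did not list: the sender's marginal is $1/K$ for every $m$, whether or not $m=m'$, hence independent of $m'$. This makes (ii) self-contained at the cost of a few lines.
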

\noindent The complete proof of both parts is given in
Appendix~\ref{app:budgets}.
A realization attaining the minimum in part~(ii) reproduces \(p(b\mid x,y)\), but may use a
different shared resource and sender encoding. If \(b\) contains a
message label, the simulator need not transmit that very label.

The freedom to use any nonsignalling resource matters. There are probabilities attainable with shared
entanglement and one classical bit that satisfy \(\Acup(p)=2\), but
require three message values for a compatible quantum realization, even
when the shared state may be chosen freely (Appendix~\ref{app:quantum-converse-gap}).

With classical communication, a violation also implies that the initial
shared quantum state is entangled. The following lemma explains why,
without requiring compatibility of the receiver's measurements.

\begin{lemma}[Classical and separable resources]
\label{lem:classicalcommunication}
Under Assumption~\ref{ass:communication}, a protocol using shared
randomness or a separable quantum state as its initial resource satisfies
\(\Acup(p)\le\abs{\mathcal M}\).
\end{lemma}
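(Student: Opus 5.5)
Since the lemma does not assume compatibility, I will construct an explicit classical simulation of the form \eqref{eq:simulation} and bound its leakage directly. The bound $\Lambda\le\abs{\mathcal M}$ will follow from a guessing argument. I treat shared randomness first. The separable case then reduces to it: a separable state $\rho_{AB}=\sum_\lambda \pi(\lambda)\rho_A^\lambda\otimes\rho_B^\lambda$ can be prepared from a shared classical variable $\lambda$ by local state preparation. Every local operation on the sender side (depending on $x$, producing $m$, possibly retaining quantum systems) and on the receiver side then acts on product states given $\lambda$.

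Concretely, condition on $\lambda$. The sender, with input $x$, produces $m$ with some probability $r(m\mid x,\lambda)$, because its operation acts only on $\rho_A^\lambda$. The receiver's measurement for setting $y$ acts on $\rho_B^\lambda$, and it may depend on $m$ and on any local ancilla. This gives a distribution $p_B(b\mid y,m,\lambda)$ that does not depend on $x$ beyond $m$: the receiver's state given $\lambda$ is $\rho_B^\lambda$ regardless of $x$, and no other communication occurs by Assumption~\ref{ass:communication}. Hence
\[
 p(b\mid x,y)=\sum_{\lambda,m}\pi(\lambda)r(m\mid x,\lambda)p_B(b\mid y,m,\lambda),
\]
with $\pi$ independent of $x$ by input independence. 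I set $Z=(\lambda,m)$ and $q(\lambda,m\mid x)=\pi(\lambda)r(m\mid x,\lambda)$. Then
\[
 \Lambda(q)=\sum_{\lambda,m}\pi(\lambda)\max_x r(m\mid x,\lambda)\le\sum_\lambda\pi(\lambda)\abs{\mathcal M}=\abs{\mathcal M}.
\]
If $\lambda$ is continuous, I replace the sums by integrals or first discretize. The final step converts this response-channel simulation into a joint distribution over $\Omega$. I sample $s_y$ independently from $p_B(\cdot\mid y,z)$ for every $y$, giving $q(s\mid x)=\sum_z q(z\mid x)\prod_y p_B(s_y\mid y,z)$. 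This satisfies the constraints in \eqref{eq:operationalcosts}, and it obeys
\[
 \sum_s\max_x q(s\mid x)\le\sum_s\sum_z\max_x q(z\mid x)\prod_y p_B(s_y\mid y,z)=\Lambda(q).
\]
This is the stochastic-response equivalence that Proposition~\ref{prop:lp} asserts, and I can invoke it if preferred. Therefore $\Acup(p)\le\abs{\mathcal M}$.

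The main obstacle is the modelling step in the separable case: justifying rigorously that the receiver's outcome distribution given $(\lambda,m,y)$ is independent of $x$. The sender's instrument may leave the joint post-measurement state correlated only through $\lambda$, and I must argue that conditioning on the outcome $m$ of a local instrument on $\rho_A^\lambda$ does not alter $\rho_B^\lambda$. This holds because the state is a product given $\lambda$, so the sender's instrument acts as $\mathcal I^{x}_m(\rho_A^\lambda)\otimes\rho_B^\lambda$. For continuous hidden variables, I would note that finite alphabets let me coarse-grain, or else simply use an integral version of $\Lambda$ with the same bound.
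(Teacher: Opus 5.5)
Your proof is correct and follows the paper's argument: the same model with hidden variable $(\lambda,m)$, independent sampling of an outcome for every setting given $(\lambda,m)$, and the bound $\sum_{\lambda,m}\pi(\lambda)\max_x r(m\mid x,\lambda)\le\abs{\mathcal M}$. The paper differs only in the technicalities: it makes $\lambda$ finite via finitely many deterministic protocols, and it covers an arbitrary separable state by trace-norm approximation with finite product mixtures plus closedness of the resulting polytope. Your integral version is fine because $\Omega$ is finite, but for infinite-dimensional separable states it presupposes a measurable product decomposition, which the paper's limit argument avoids.
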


\begin{proof}
A shared-randomness protocol has the classical communication model
\[
 p(b\mid x,y)=\sum_\lambda p(\lambda)
 \sum_{m\in\mathcal M}p_A(m\mid x,\lambda)
 p_B(b\mid m,y,\lambda),
\]
where \(p(\lambda)\) is independent of \(x,y\). The hidden variable
\(\lambda\) can be taken finite, since for fixed finite alphabets these
correlations form the convex hull of finitely many deterministic protocols
\cite[Sec.~II]{PR_BraskChaves2017}.
Conditional on \((m,\lambda)\), sample an outcome independently for each
receiver setting. This gives the joint distribution
\[
 q(s\mid x)=\sum_{m,\lambda}p(\lambda)p_A(m\mid x,\lambda)
                  \prod_y p_B(s_y\mid m,y,\lambda),
\]
whose marginals are \(p(b\mid x,y)\). The maximum of a sum is at most
the sum of its maxima. Using this inequality and normalization gives
\[
 \Acup(p)\le\sum_s\max_x q(s\mid x)
 \le\sum_{m,\lambda}p(\lambda)\max_x p_A(m\mid x,\lambda)
 \le\abs{\mathcal M},
\]
since \(p_A(m\mid x,\lambda)\le1\) and \(\sum_\lambda p(\lambda)=1\).
Finite mixtures of product states have the same classical form.
Every separable state can be approximated in trace norm by such mixtures.
With the protocol fixed, each outcome probability then converges to that
of the original state. The convex hull of finitely many deterministic
protocols is closed, so the limiting distribution also has a classical
model and obeys the bound.
\end{proof}

Consequently, any exact one-way classical simulation with
shared randomness requires a message with at least
\(\lceil\Acup(p)\rceil\) possible values, or
\(\lceil\log_2\Acup(p)\rceil\) fixed-length bits.
Without preshared entanglement, transmitting a \(d\)-dimensional
quantum system instead gives the incompatibility threshold \(d\).
In each case, the communication restriction supplies an upper bound on
\(P_g(X\mid B)\), so applying \eqref{eq:informationbound} requires
neither an estimate of this guessing probability nor calibrated states.
We focus below on classical messages, whose bound follows from
nonsignalling and the number of possible messages, without a dimension
assumption on the shared quantum systems.

If the classical message is sent through a specified noisy channel
\(\mathcal N(m\mid u)\), where \(u\) is sent and \(m\) is received,
the threshold becomes the channel's communication value
\(\Lambda(\mathcal N)=\sum_m\max_u\mathcal N(m\mid u)\).
All communication must pass through this channel. Its noise is independent
of the shared resource and devices, which receive no additional information
about that noise beyond the channel output.
Proposition~\ref{prop:noisychannel} in Appendix~\ref{app:noisychannel}
states and proves this extension.

\subsection{Linear program and witnesses}
\label{sec:lp}
The minimization in \eqref{eq:operationalcosts} can be written as a linear
program (LP). Its variables are the joint outcome probabilities and upper
bounds on them, while the observed probabilities fix the marginals.
\begin{proposition}[Linear-program formulation]\label{prop:lp}
The quantity \(\Acup(p)\) is the least \(\Lambda(q)\) among all finite
classical simulations \eqref{eq:simulation}. With
\(\Omega=\prod_y\mathcal B_y\) the joint outcome space, it is the
optimum of the following LP over nonnegative variables \(q(s\mid x)\)
and \(t_s\):
\begin{equation}
\begin{aligned}
 \Acup(p)&=\min_{q,t}\sum_{s\in\Omega}t_s\\
 \textnormal{subject to}\quad
 q(s\mid x)&\le t_s &&(\forall x,s),\\
 \sum_{s\in\Omega:\,s_y=b}q(s\mid x)&=p(b\mid x,y)
                   &&(\forall x,y,b).
\end{aligned}
\label{eq:lp}
\end{equation}
\end{proposition}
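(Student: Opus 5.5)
The proposition makes two claims, and I will prove them separately. The first is that \(\Acup(p)\), defined in \eqref{eq:operationalcosts} over joint outcome distributions, equals the infimum of \(\Lambda(q)\) over all finite classical simulations \eqref{eq:simulation}, where the responses may be stochastic. The second is that this minimum is the optimum of the LP \eqref{eq:lp}.

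The LP part is the standard epigraph reformulation. For fixed \(q\in\mathcal C(p)\), the choice \(t_s=\max_x q(s\mid x)\) is feasible, so the LP optimum is at most \(\Acup(p)\). Conversely, any feasible \((q,t)\) satisfies \(t_s\ge\max_x q(s\mid x)\), so \(\sum_s t_s\ge\Lambda(q)\ge\Acup(p)\). The nonnegativity and marginal constraints are exactly those defining \(\mathcal C(p)\), and the minimum is attained by the compactness argument already given after \eqref{eq:operationalcosts}.

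For the simulation part, first note that every \(q\in\mathcal C(p)\) is itself a simulation. Take \(\mathcal Z=\Omega\) with deterministic response \(p_B(b\mid y,s)=\delta_{b,s_y}\); then \eqref{eq:simulation} reproduces \(p\) and \(\Lambda(q)\) is exactly the objective. Hence the least \(\Lambda\) over simulations is at most \(\Acup(p)\).

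For the reverse inequality, take any finite simulation \((q(z\mid x),p_B(b\mid y,z))\). Define \(\tilde q(s\mid x)=\sum_z q(z\mid x)\prod_y p_B(s_y\mid y,z)\), which sits in \(\mathcal C(p)\) because its \(y\)-marginal reproduces \eqref{eq:simulation}. Then, as in the proof of Lemma~\ref{lem:classicalcommunication}, bound the max of a sum by the sum of maxima:
\[
\sum_s\max_x\tilde q(s\mid x)\le\sum_s\sum_z\max_x q(z\mid x)\prod_y p_B(s_y\mid y,z)=\sum_z\max_x q(z\mid x)=\Lambda(q),
\]
where the product sums to one over \(s\). This gives \(\Acup(p)\le\Lambda(q)\).

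The main step is this data-processing inequality, which says that applying the response channel cannot increase \(\Lambda\). It is elementary here because the product factors do not depend on \(x\) and so can be pulled out of the maximum.
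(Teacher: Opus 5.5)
Your proof is correct and follows essentially the same route as the paper's. You embed \(\mathcal C(p)\) via deterministic responses, use the product response channel \(\prod_y p_B(s_y\mid y,z)\) with the max-of-sum data-processing bound \eqref{eq:dataprocessing}, and take the epigraph choice \(t_s=\max_x q(s\mid x)\) to obtain the LP.
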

\noindent The proof is given in Appendix~\ref{app:lp}.

Each \(t_s\) bounds the probability of the outcome string \(s\) for
every input. Stochastic responses can be replaced by distributions
over such strings without increasing \(\Lambda\), which gives the
reduction to this LP.
The LP has \((N+1)\abs{\Omega}\) variables. Its size grows linearly
with \(N\) for fixed measurement and outcome alphabets. For rational
data, it can be solved in time polynomial in its explicit description
length, including the bits used to represent the coefficients.

When the bound is violated, the LP also supplies a weighted sum of the
observed probabilities that certifies the violation. This linear witness is obtained from the dual LP
in Appendix~\ref{app:basic} and can be evaluated without solving the LP
again.
Each choice of real coefficients \(w_{xyb}\) satisfying the constraint
in \eqref{eq:dualcompact} defines a linear witness
\[
 W(p)=\sum_{x,y,b}w_{xyb}\,p(b\mid x,y)\le\Acup(p)
 \qquad\text{for every }p.
\]
Thus \(W(p)>\abs{\mathcal M}\) certifies incompatibility under the
classical communication bound. By strong duality, the optimal witness value
equals \(\Acup(p)\) for the given observations, so every violation
of the test has such a certificate.

The elementary properties of \(\Acup\) explain how to interpret the
test. It equals one precisely when the observations contain no
information about \(x\), and it is at least as large as the communication
value \(\sum_b\max_x p(b\mid x,y)\) for any individual measurement. The upper bound
\(\min\{N,\abs{\Omega}\}\) limits the violation possible with given
input and outcome alphabets. Operationally, a convex mixture means choosing between protocols by a
coin toss independent of \(x,y\), and pooling the data without reporting
the coin outcome. Convexity of \(\Acup\) means that this random choice
cannot create a violation when each protocol obeys the same bound. Discarding measurements or postprocessing
their outcomes independently of \(x\) cannot increase \(\Acup\) either.
These properties and the dual formula are stated and proved in
Proposition~\ref{prop:basic}. Each dual witness also defines a
communication task: with independent uniform inputs, its coefficients
can be shifted and rescaled into payoffs in \([0,1]\), preserving the
comparison between quantum and classical protocols.

With finite data, the probabilities are known only within statistical
uncertainty. If a nonempty simultaneous confidence region is given by linear
inequalities, minimizing \(\Acup(p)\) over that region remains a linear
program. A value above the communication threshold means that every
distribution in the confidence region violates the bound. A false
certificate can then occur only if the true probabilities lie outside the confidence region
(Corollary~\ref{cor:confidence}, Appendix~\ref{app:confidence}). The region
must have the claimed coverage for the sampling model used.

\section{Which fixed quantum resources can violate the bound?}
\label{sec:frontier}
Theorem~\ref{thm:communication} characterizes observations that admit
a compatible realization with unrestricted nonsignalling assistance.
We now fix the shared quantum state and the receiver's measurements.
These do not yet determine the observed probabilities, which also depend
on how the sender measures its system and encodes the message. We ask
whether some choice of these sender operations makes our test detect
incompatibility of the fixed receiver measurements. We show that a
violation is possible exactly
when two-outcome sender measurements on the same state produce
Bell-nonlocal correlations with the receiver's measurements for at least
one message value.
Here Bell-nonlocal means that the correlations
cannot be reproduced by a local hidden-variable model
\[
 P(a,b\mid x,y)=\sum_\lambda p(\lambda)
 p_A(a\mid x,\lambda)p_B(b\mid y,\lambda),
\]
with \(p(\lambda)\) independent of \(x,y\).

Let \(\rho_{AB}\) be the initial shared state and let
\(M^{(m)}=(M^{(m)}_{b|y})_{b,y}\) denote the receiver's effective POVMs conditional on
message \(m\), for \(\abs{\mathcal M}\ge2\). The index \(b\) denotes
the measurement outcome. These POVMs act on the receiver's initial system \(B\)
and include all local processing and fixed ancillary systems. Finite
outcome alphabets \(\mathcal B_y\) may be chosen to agree for all
messages by adding zero effects. The receiver may classically process
\((m,b,y)\), independently of \(x\). In particular, the reported outcome may be the
pair \((m,b)\), in which case the sender's POVMs \((A_{m|x})_m\) give
\[
 p(m,b\mid x,y)=\tr[(A_{m|x}\otimes M^{(m)}_{b|y})\rho_{AB}].
\]
The reported alphabet is then \(\mathcal M\times\mathcal B_y\).
Reporting a message already received does not increase communication.
The state \(\rho_{AB}\) includes any accessible classical registers
correlated with the quantum systems.

\Needspace{5\baselineskip}
\begin{theorem}[Characterization for fixed quantum resources]\label{thm:operationalfrontier}
Under Assumption~\ref{ass:communication}, fix a shared state
\(\rho_{AB}\) and the receiver's measurement families
\(M^{(m)}=(M^{(m)}_{b|y})_{b,y}\), one for each message value
\(m\in\mathcal M\), with \(\abs{\mathcal M}\ge2\).
The finite sender-input and reported-outcome alphabets, the sender's
measurements and the receiver's classical output processing may be chosen.
The following are equivalent:
\begin{enumerate}
 \item A communication experiment using the fixed state and receiver
       measurements can produce probabilities \(p\) with
       \(\Acup(p)>\abs{\mathcal M}\), certifying incompatibility of
       at least one measurement family \(M^{(m)}\).
 \item For some message value \(m\), the receiver's measurements
       \(M^{(m)}\) and a finite family of two-outcome sender POVMs
       \(F_{a|x}\) produce Bell-nonlocal correlations
       \[
       P(a,b\mid x,y)=\tr[(F_{a|x}\otimes M^{(m)}_{b|y})\rho_{AB}],
       \qquad a\in\{0,1\}.
       \]
\end{enumerate}
Whenever these conditions hold, a one-bit protocol achieves
\(\Acup(p)>2\) with the same state, using the receiver's measurements
for the message value in condition~2 and for one other message value.
\end{theorem}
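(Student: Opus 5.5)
The plan is to prove the two implications separately. The contrapositive of $1\Rightarrow2$ is a direct construction of a simulation. The implication $2\Rightarrow1$ needs an explicit one-bit protocol together with a dual (witness) argument, and that is where I expect the difficulty.

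\emph{Step 1: reduction to reported pairs.} By the monotonicity properties in Proposition~\ref{prop:basic}, classical postprocessing of $(m,b)$ independent of $x$ cannot increase $\Acup$. It therefore suffices to bound $\Acup$ for the data $p(m,b\mid x,y)=\tr[(A_{m|x}\otimes M^{(m)}_{b|y})\rho_{AB}]$.

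\emph{Step 2: $1\Rightarrow2$ by contrapositive.} Assume that for every $m$, every finite family of two-outcome sender POVMs gives local correlations with $M^{(m)}$. Fix $m$ and set $F^{(m)}_{1|x}=A_{m|x}$ and $F^{(m)}_{0|x}=\id-A_{m|x}$. Locality gives
\[
 p(m,b\mid x,y)=\sum_\lambda p_m(\lambda)\,p^m_A(1\mid x,\lambda)\,p^m_B(b\mid y,\lambda),
\]
where $\lambda$ can be taken finite by the deterministic-vertex argument used in Lemma~\ref{lem:classicalcommunication}. Joint strings are $s=(s_y)_y$ with $s_y=(m_y,b_y)$. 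Define
\[
 q(s\mid x)=\sum_m\sum_\lambda p_m(\lambda)\,p^m_A(1\mid x,\lambda)\prod_y\delta_{m_y,m}\,p^m_B(b_y\mid y,\lambda).
\]
Its marginals are $p(m,b\mid x,y)$. Bounding the maximum of a sum by the sum of maxima gives
\[
 \sum_s\max_x q(s\mid x)\le\sum_m\sum_\lambda p_m(\lambda)\max_x p^m_A(1\mid x,\lambda)\le\abs{\mathcal M}.
\]
This is exactly why two-outcome sender measurements suffice in condition~2: each message value is an ``event versus not'' binary measurement on the sender side.

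\emph{Step 3: $2\Rightarrow1$ with one bit.} Take the nonlocal family $F_{a|x}$ for message $m_0$, and pick any other message $m_1$. I would double the sender inputs to $(x,c)$ with $c\in\{0,1\}$. On input $(x,c)$ the sender measures $F_{\cdot|x}$ and sends $m_0$ if $a=c$, otherwise $m_1$. The receiver measures $M^{(m_0)}$ after $m_0$, and after $m_1$ it measures $M^{(m_1)}$ but reports only a fixed symbol $\perp$.

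The resulting data are $p(m_0,b\mid (x,c),y)=P(c,b\mid x,y)$ and $p(\perp\mid (x,c),y)=P(1-c\mid x)$. Suppose $\Acup(p)\le2$. Take an optimal $q$ with $t_s=\max_{x,c}q(s\mid x,c)$. Restricted to strings with all components of the form $(m_0,\cdot)$, the ratios $q(s\mid x,c)/t_s\in[0,1]$ should serve as sender response functions, and the total weight of these strings should be at most $1$, because the $c$-symmetry forces the two branches to share the budget $2$. From this I would reconstruct a local model $P(a,b\mid x,y)=\sum_\lambda p(\lambda)p_A(a\mid x,\lambda)p_B(b\mid y,\lambda)$, which contradicts condition~2.

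If this direct reconstruction is awkward, I would argue dually instead. Nonlocality gives a Bell inequality in Clauser--Horne form involving only $P(1,b\mid x,y)$ and the marginals $P(1\mid x)$, with local bound $\beta$ and quantum value exceeding $\beta$. I would then show that some affine rescaling of it is a feasible dual witness in \eqref{eq:dualcompact} whose value is at most $2$ exactly on local data. Then $W(p)>2$ and hence $\Acup(p)>2$, by the witness inequality in Section~\ref{sec:lp}.

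\emph{Main obstacle.} The main obstacle is the final step: showing that $\Acup\le2$ on this doubled-input protocol really yields a normalized local model. The difficulty is that $q$ may mix strings whose components carry different message labels across different $y$, which is the mismatch between the per-string budget $t_s$ and normalization of the hidden variable. I expect the $c$-doubling, or including inputs where the sender sends $m_0$ deterministically, to be what pins the weight on all-$m_0$ strings to at most $1$. I would check this first on the CHSH case before generalizing.
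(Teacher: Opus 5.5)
Your contrapositive for $1\Rightarrow2$ is correct and matches the paper. The paper routes it through Fine's coupling, $q_m(s\mid x)=Q^{(m)}(0,s\mid x)\le Q^{(m)}_B(s)$, while you write the hidden-variable model out directly; the bound is the same.

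The gap is in $2\Rightarrow1$. The protocol you propose does not produce a violation, so neither your reconstruction nor your dual fallback can be completed. Take the Tsirelson correlation $P(a,b\mid x,y)=\frac14\bigl(1+(-1)^{a\oplus b\oplus xy}/\sqrt2\bigr)$ with $x,y,a,b\in\{0,1\}$, which has $P(a\mid x)=\frac12$. For your doubled inputs set
\[
 q\bigl((m_0,b),\perp\mid x,c\bigr)=P(c,b\mid x,0),\qquad
 q\bigl(\perp,(m_0,b)\mid x,c\bigr)=P(c,b\mid x,1),
\]
and give all other strings zero weight. This $q$ is normalized and has exactly your marginals, because each coordinate must show $\perp$ with probability $P(1-c\mid x)=\frac12$. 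Its column maxima sum to $4\cdot\frac14(1+1/\sqrt2)=1+1/\sqrt2<2$. So $\Acup(p)<2$ although $P$ violates CHSH. Since every dual witness obeys $W(p)\le\Acup(p)$, no rescaled Clauser--Horne inequality can exceed $2$ on these data either. The mixed-label strings you flagged are exactly what defeats the construction. Nothing in your protocol forces the constant-label strings to absorb a full unit of budget per label, so the mass can sit on strings with a single non-$\perp$ entry, and each setting is then simulated separately. The $c$-doubling does not constrain this.

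The missing idea is to add, for \emph{every} message value $m$, an input $x_m$ on which the sender sends $m$ deterministically, with the receiver reporting $(m,b)$. On the original inputs the sender simply sends one message for $a=0$ and the other for $a=1$; no doubling is needed. At $x_m$ the vanishing marginals of the other labels confine $q$ to constant-label strings, so $\sum_sq_m(s\mid x_m)=1$, and the constant-label blocks alone give $\Lambda(q)\ge\abs{\mathcal M}$. If $\Acup\le\abs{\mathcal M}$, equality forces zero weight on mixed-label strings and $q_1(s\mid x)\le q_1(s\mid x_1)$. Then $Q(0,s\mid x)=q_1(s\mid x)$ and $Q(1,s\mid x)=q_1(s\mid x_1)-q_1(s\mid x)$ form a Fine coupling with $x$-independent receiver marginal $q_1(\cdot\mid x_1)$. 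Its outcome-one marginals are $\tr(\rho_BM^{(1)}_{b|y})-P(0,b\mid x,y)=P(1,b\mid x,y)$ by nonsignalling, so $P$ would be local.

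A deterministic input for $m_0$ alone is not enough. Send $m_0$ on $a=0$ and $m_1$ on $a=1$, take a trivial $M^{(m_1)}$, and use the same Tsirelson data. Put weight $\frac12$ on the all-$\perp$ string at the original inputs. Put constant-$m_0$ weights $(\alpha,0,0,\beta)$ and $(0,\alpha,\beta,0)$ on $(00,01,10,11)$ for $x=0,1$, with $\alpha,\beta=\frac14(1\pm1/\sqrt2)$, and the uniform distribution at $x_{m_0}$. This gives $\Lambda(q)=\frac32+\frac1{2\sqrt2}<2$.

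Finally, your Step~3 only aims at $\Acup>2$. For $\abs{\mathcal M}\ge3$, condition~1 needs $\Acup>\abs{\mathcal M}$, which again requires deterministic inputs for all message values.
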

\noindent The proof is given in Appendix~\ref{app:frontier-proof}.

In condition~2, the label \(m\) selects the receiver's measurements in advance,
and no message is transmitted. It is not chosen from the sender's
outcome in this Bell experiment.
The two-outcome restriction concerns the sender's tests, not the number
of settings or outcomes available to the receiver.

The proof relates common classical simulations of the communication
experiment to local models of the Bell correlations. If a local model
exists for each fixed message value, these models combine into a common
classical simulation \(q\) with \(\Lambda(q)\le\abs{\mathcal M}\).
The observed probabilities therefore satisfy the compatibility bound.
Conversely, we add inputs on which the sender always sends a specified
message and let the receiver include that message in the report.
A joint explanation satisfying the compatibility bound would then give a local
model for the original Bell correlation. Additional sender inputs may be
necessary for this construction: \(\Acup\le N\) rules out a violation
of our one-bit bound with only two inputs, even when the associated Bell
correlation violates the Clauser--Horne--Shimony--Holt (CHSH) inequality.

The use of deterministic additional inputs has precedents in
Bell-to-instrumental constructions \cite[Secs.~5--6]{PR_VanHimbeeck2019}
and in the nonsignalling reconstruction of
Vieira \emph{et al.}~\cite[Theorem~1]{PR_Vieira2025Outcome}.
Here it gives an exact detection criterion for a fixed state and
receiver measurements. A local model
for all sender and receiver measurements already
implies a classical simulation \cite{PR_Pauwels2022}; the theorem requires
locality only separately for the measurements associated with each message.

If the state may also be chosen, the receiver's measurement families are undetectable
exactly when each \(M^{(m)}\) is Bell-local for every shared state and
all two-outcome sender POVMs. For noisy projective qubit measurements,
a related result equates classical one-bit simulability for all
preparations, with shared randomness, to Bell locality against two-outcome
sender measurements on a maximally entangled state
\cite[Theorem~3]{PR_Egelhaaf2025}.

Every incompatible binary pair can
violate CHSH on a suitable shared state
\cite[Proposition~2]{PR_Wolf2009}. Using such a pair for one message
value therefore permits a one-bit violation when that state is available.
Detection is not restricted to pairs: for every \(n\ge3\), Bene and
V\'ertesi construct \(n\) binary measurements whose proper subfamilies
are all compatible, but which are Bell-nonlocal against two-outcome
measurements \cite[Sec.~IV]{PR_Bene2018}. Using such a family for one
message value permits detection of global incompatibility. Nevertheless,
some incompatible families remain undetectable: a known family of three
incompatible qubit measurements is Bell-local for every shared state
\cite[Secs.~II--III]{PR_Bene2018}. When used for every message value,
this family cannot violate the bound for any sender measurements and
classical output processing allowed by the theorem.

\Needspace{10\baselineskip}
\section{Private randomness with classical communication}
\label{sec:privacy}

We apply the incompatibility test to private randomness generation at
the receiver, allowing the communicating sender to be adversarial.
The sender may retain quantum information correlated with the receiver's
outputs. We first bound their conditional entropy in one round, then
use a fixed linear test to certify extraction of private bits from
repeated rounds with device memory. Neither step requires reconstructing
the state or measurements.

\subsection{A lower bound on conditional output entropy}

In this section, \(X,Y,M,B\) denote the classical random inputs, message
and outcome, with values \(x,y,m,b\), respectively. In particular,
\(B\) now denotes the classical output rather than the receiver's quantum
system.
The sender and receiver may initially share an arbitrary finite-dimensional
quantum state, with no bound on its dimension. A trusted classical controller
supplies independent uniform inputs
$X\in\{1,\ldots,N\}$ and $Y\in\{1,\ldots,n\}$, with $n\ge2$.
The inputs are independent of all prior device and adversary systems.
The sender transmits $M\in\mathcal M$ before $Y$ is chosen, and the
receiver returns a binary outcome $B\in\{0,1\}$ privately to the controller
(Fig.~\ref{fig:privacy}). The message is the only communication between
the devices, and no other signal carrying information about $X$ reaches
the receiver. The controller records the message together with the outcome.

The adversary's system $E$ includes the sender's remaining quantum
system and the quantum systems held by any external adversary. The adversary may postpone measuring
$E$ until after the inputs and message are revealed.
No receiver output or output-dependent feedback is released during data
generation. All systems retained by the receiver, including any stored
outcomes, remain isolated afterwards or are securely erased. Otherwise,
a device with memory could store its outputs and reveal them later.
Both devices may retain internal memory during the experiment.

\begin{figure}[ht]
\centering
\begin{tikzpicture}[x=1cm,y=1cm]
 \node[paper box,minimum width=23mm] (sender) at (2.3,0) {sender};
 \node[paper box,minimum width=23mm] (receiver) at (8.8,0) {receiver};
 \node[paper box,minimum width=28mm] (state) at (5.55,1.25) {shared state};
 \draw[paper arrow] (state.west) -| (sender.north);
 \draw[paper arrow] (state.east) -| ([xshift=-6mm]receiver.north);
 \node[paper label] (x) at (.15,0) {$x$};
 \draw[paper classical,paper arrow] (x.east) -- (sender.west);
 \draw[paper classical,paper arrow] (sender.east) -- (receiver.west)
       node[midway,above=5pt,paper label] {message $m\in\mathcal M$};
 \node[paper label] (y) at (9.4,1.25) {$y$};
 \draw[paper classical,paper arrow] (y.south) -- ([xshift=6mm]receiver.north);
 \node[paper box,minimum width=37mm] (process) at (8.8,-1.45) {trusted processing};
 \draw[paper classical,paper arrow] (5.55,0) -- (5.55,-1.45) --
       node[midway,above=3pt,paper label] {$m$} (process.west);
 \draw[paper classical,paper arrow] (receiver.south) --
       node[right=3pt,paper label] {$b$} (process.north);
 \draw[paper classical,paper arrow] (process.east) -- (12.1,-1.45)
       node[right,paper label] {$K$};
 \node[paper label] (eve) at (2.3,-1.45) {$E$};
 \draw[paper arrow] (sender.south) -- (eve.north);
\end{tikzpicture}
\caption{Private randomness with a public incoming message. The input $y$
is chosen after $m$ is fixed. The controller records a copy of $m$, collects
$b$ privately, and extracts $K$ after testing the statistics. The system $E$ contains the
sender's residual and external quantum information. During data generation,
there is no return communication to the sender. Double lines carry classical information.}
\label{fig:privacy}
\end{figure}

\Needspace{5\baselineskip}
The conditional entropy $H(B\mid EXYM)$ measures the uncertainty about
$B$ that remains when the adversary holds $E$ and learns the inputs and
message. It is an average over the uniformly sampled inputs and the
resulting messages, so a positive bound need not hold for every fixed
triple $(x,y,m)$. The following theorem bounds this uncertainty from the violation of
the union-leakage bound.

\begin{theorem}[Private output entropy]
\label{thm:privacy}
In the quantum scenario of Section~\ref{sec:privacy}, let
$p=p(m,b\mid x,y)$ be the joint probabilities of the actual message and
receiver outcome. Put
$\delta=[\Acup(p)-\abs{\mathcal M}]_+$.
Then
\begin{equation}
 H(B\mid EXYM)\ge
 1-h_2\!\left(\frac12+\frac{2\delta}{Nn(n-1)}\right).
 \label{eq:privacy-entropy}
\end{equation}
Here $H$ is conditional von Neumann entropy and $h_2$ is binary entropy.
In particular, $\Acup(p)>\abs{\mathcal M}$ certifies positive conditional
output entropy.
\end{theorem}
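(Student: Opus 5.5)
The plan is to show that a violation of the compatibility bound forces, for some message values, a quantitative amount of non-commutativity in the receiver's binary observables, measured relative to the state the receiver holds after the message. This non-commutativity is then converted into conditional entropy against $E$ using a standard qubit reduction. Throughout, fix the conditional post-message states: for each $(x,m)$ let $\sigma^{x,m}_{EB}$ be the subnormalized state of $E$ and the receiver's system after the sender's instrument produced $m$, with $\tr\sigma^{x,m}=p(m\mid x)$. For each $m$ and $y$, write $O^{(m)}_y=M^{(m)}_{0|y}-M^{(m)}_{1|y}$ for the receiver's observable. By Naimark dilation, which is absorbed into $B$, and because the effective measurements may depend on $m$, we may take these observables to be projective and $\pm1$-valued.

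\textbf{Step 1 (simulation from pairwise commutation).} Build an explicit element of $\mathcal C(p)$ whose value $\Lambda(q)$ exceeds $\abs{\mathcal M}$ by at most a sum of pairwise incompatibility terms. As in Lemma~\ref{lem:classicalcommunication} and Theorem~\ref{thm:communication}(i), if every family $M^{(m)}$ were compatible, the pair (message, parent outcome) would give $q$ with $\Lambda(q)\le\abs{\mathcal M}$. For general families, construct a joint string $s$ for given $(x,m)$ by measuring the settings one after another in a fixed order. The resulting sequential distribution has exactly the right single-setting marginal for the first setting. The marginal for each later setting $y'$ deviates from $p(b\mid x,y')$ by at most a disturbance term bounded by the commutator weights
\[
 c^{x,m}_{yy'}=\tfrac12\bigl|\tr\bigl(\sigma^{x,m}[O^{(m)}_y,O^{(m)}_{y'}]^\dagger[O^{(m)}_y,O^{(m)}_{y'}]\bigr)\bigr|^{1/2}\!\!,
\]
or a similar state-dependent quantity. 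Correct these marginals by mixing in product distributions. Using that each $\max_x$ term is at most one, and that the correction for $(x,y,y')$ adds at most its weight to $\sum_s\max_x q(s\mid x)$, I expect an estimate of the form
\[
 \Acup(p)-\abs{\mathcal M}\le \sum_{x}\sum_{m}\sum_{y<y'} c^{x,m}_{yy'}.
\]
This would explain the combinatorial factor $N\binom n2=Nn(n-1)/2$ in the statement: after normalization, $\delta$ lower-bounds $N\binom n2$ times the average pairwise incompatibility $\bar c$.

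\textbf{Step 2 (entropy per pair).} For a fixed pair $y\ne y'$ and message $m$, apply Jordan's lemma to the two binary projective measurements. The receiver's space then splits into qubit blocks in which $O_y,O_{y'}$ are Pauli observables at some angle, and $E$ together with the block label may be treated as side information. Within each block, a qubit uncertainty relation with quantum memory lower-bounds $\tfrac12[H(B_y\mid E)+H(B_{y'}\mid E)]$ by $1-h_2(\tfrac12+\tfrac12|\sin\theta|\cdot\text{(weight)})$, in the style of Pironio \emph{et al.} The function $t\mapsto 1-h_2(\tfrac12+t)$ is convex and increasing on $[0,\tfrac12]$, so averaging over blocks and over the uniform $y$ within the pair yields a bound in terms of the commutator weight $c^{x,m}_{yy'}$.

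\textbf{Step 3 (averaging).} The quantity $H(B\mid EXYM)$ is the average over uniform $x$, uniform $y$, and $m$ drawn from $p(m\mid x)$. Each ordered setting belongs to $n-1$ pairs, so the setting average equals an average over pairs. Convexity (Jensen) then moves the average inside $1-h_2(\tfrac12+\cdot)$, and substituting the bound from Step 1 gives \eqref{eq:privacy-entropy} with argument $\tfrac12+2\delta/(Nn(n-1))$. The final claim follows because $\delta>0$ makes the argument exceed $\tfrac12$.

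The main obstacle is Step 1. We need a single commutation measure that simultaneously (a) controls the increase of $\Lambda$ additively in the required form, with the constants coming out exactly, and (b) is the same quantity that the Jordan-block uncertainty relation in Step 2 uses. I would first try taking $c$ to be the block-averaged $|\sin\theta|$ weighted by the conditional state. I would then bound the disturbance of the sequential measurement by a gentle-measurement estimate in each qubit block, where everything is explicit.
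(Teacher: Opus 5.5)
The gap is the one you flag at the end, but it is more than a matter of constants: with the measures you propose, Step~2 is false. Your commutator weight $c^{x,m}_{yy'}$, and likewise a block-averaged $|\sin\theta|$, depends on the receiver's conditional state only through its Jordan-block populations. The adversary, however, may hold a purification of that state. Take a receiver qubit maximally entangled with the sender's retained system, a fixed message, and $O_1=Z$, $O_2=X$. Then $c_{12}=1$ and $|\sin\theta|=1$, so Step~2 would make $\tfrac12[H(B_1\mid E)+H(B_2\mid E)]$ close to one. Yet $E$ predicts either outcome perfectly, so both entropies vanish. (Here $p$ is independent of $x$, so $\delta=0$ and the theorem itself is not contradicted.)

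Any valid per-setting bound against a purifying adversary must vanish when the receiver's state is already diagonal in the measured basis. The quantity that does this is the disturbance $T(\tau_x,\Delta_y(\tau_x))$. Here $\tau_x=\bigoplus_m\sigma_{m\mid x}$ is the receiver's state including the message, and $\Delta_y$ dephases in the dilated projective measurement. With a purification, the measurement-entropy identity of \cite{PR_Coles2012} gives $H(B\mid EM,X=x,Y=y)\ge S(\Delta_y(\tau_x))-S(\tau_x)$. For a binary projective measurement with $U=P_{0\mid y}-P_{1\mid y}$, this entropy increase is the Holevo information of the equiprobable pair $\tau_x$, $U\tau_xU$. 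Their trace distance is $2T(\tau_x,\Delta_y(\tau_x))$, so Helstrom's bound and Fano's inequality give at least $1-h_2\bigl(\tfrac12+T(\tau_x,\Delta_y(\tau_x))\bigr)$. No Jordan decomposition or pairwise uncertainty relation is needed; incompatibility enters only through Step~1.

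Step~1 must then use this same per-setting disturbance, and doing so also repairs it for $n\ge3$. After several intervening measurements, the deviation of a later marginal involves commutators of $O_y$ with effects already dephased by intermediate measurements, such as $\Delta_{y'}(P_{b\mid y''})$. Your pairwise weights on $\sigma^{x,m}$ do not directly control these terms. Telescoping and trace-distance contractivity instead bound the deviation by the sum of the predecessors' disturbances $T(\tau_x,\Delta_{y'}(\tau_x))$.

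To get the exact constant, randomize between an order and its reverse, so that each other setting precedes a given one exactly once. This sequential procedure is a parent measurement for each message. Its marginals $c$ therefore satisfy $\Acup(c)\le\abs{\mathcal M}$ by Theorem~\ref{thm:communication}(i), and they differ from $p$ by average total variation $\varepsilon\le(n-1)d/2$, where $d$ is the average of $T(\tau_x,\Delta_y(\tau_x))$ over $(x,y)$.

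Mixing in product distributions would alter every marginal. Instead, adjust one coordinate at a time within each input row by a maximal coupling. This moves mass equal to the total-variation distance and raises $\sum_s\max_x q(s\mid x)$ by at most that amount. Hence $\Acup(p)\le\abs{\mathcal M}+Nn\varepsilon$, so $\delta\le\frac{Nn(n-1)}{2}\,d$. Your Step~3, using convexity and monotonicity of $t\mapsto1-h_2(\tfrac12+t)$, then finishes the proof. The factor $N\binom n2$ therefore comes from averaging the two orders over per-setting disturbances, not from a sum of pairwise incompatibility terms.
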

\noindent The proof is given in Appendix~\ref{app:privacy-entropy}.

Forgetting the message cannot increase $\Acup$, so one may also use
$[\Acup(p_B)-\abs{\mathcal M}]_+$, where
$p_B(b\mid x,y)=\sum_m p(m,b\mid x,y)$.
The entropy is still conditioned on the public message. The raw output
need not be uniform; extraction from many outputs is the next step.

The proof compares the experiment with a hypothetical measurement
that performs all receiver measurements sequentially and records their
outcomes. This defines a parent measurement for each message and therefore
obeys the compatibility bound. A violation forces disturbance in this
sequential construction, which the proof relates to the remaining output
uncertainty when the adversary holds a purification.

\subsection{Extracting private bits over repeated rounds}
\label{sec:privacy-sequential}

The quantity that controls extraction from a finite output string is
its smooth conditional min-entropy. Without smoothing, conditional
min-entropy is the negative logarithm of the adversary's optimal
probability of guessing the entire string given its side information.
Smoothing allows a small controlled approximation of the state.
Generalised entropy accumulation \cite{PR_Metger2024} relates our
single-round von Neumann entropy bound to this quantity for the complete
sequence, even when device memory correlates different rounds.
Privacy amplification against quantum side information
\cite{PR_Tomamichel2011} then turns the outputs into nearly uniform private
bits: a hash function chosen with independent public randomness maps the
output sequence to a shorter bit string.

For the privacy protocol, we subtract the compatibility bound from a
linear witness, so that positive values certify a violation. The
constant subtraction can be included in its coefficients because the
conditional probabilities sum to one for each input pair. In this
section, $W$ denotes this shifted witness. Fix real coefficients
$w_{xymb}$ satisfying
\begin{equation}
 W(p)=\sum_{x,y,m,b}w_{xymb}\,p(m,b\mid x,y)
 \le\Acup(p)-\abs{\mathcal M}
 \qquad\text{for every }p.
 \label{eq:privacy-witness}
\end{equation}
The dual of the union-leakage linear program supplies these coefficients
(Appendix~\ref{app:basic}). Every violation of the compatibility bound has such a witness
with $W(p)>0$.

Theorem~\ref{thm:privacy} applies with $\delta=[W(p)]_+$.
For uniform inputs, the round score $S=Nn\,w_{XYMB}$ has expectation
$W(p)$, so its observed average is the test statistic. The factor $Nn$
accounts for the uniform sampling of the input pair. With device memory,
the round distributions may differ and the scores may be correlated.
Entropy accumulation bounds the entropy of the whole output string
from the accepted average score.

\paragraph{Protocol.}
Fix the coefficients before collecting data. The controller then
proceeds as follows.
\begin{enumerate}
\item Choose the number of rounds $L$, a security parameter
$0<\varepsilon<1$, and an acceptance threshold
\[
 0<w_0<\frac{Nn(n-1)}4.
\]
\item Run $L$ rounds of the scenario above, with independent uniform
inputs in every round and the same communication and isolation
conditions throughout. Record $S_i=Nn\,w_{X_iY_iM_iB_i}$ privately.
Accept if $L^{-1}\sum_{i=1}^L S_i\ge w_0$, and otherwise abort.
\item On acceptance, choose a hash function uniformly from a two-universal family
using an independent public random seed sampled after data generation.
Apply it to $B_1\cdots B_L$ to produce a string $K$. The output length
$\ell$ is fixed in advance. The corollary below gives a guaranteed
asymptotic rate for this extraction.
\end{enumerate}
Two-universality means that any two distinct input strings have
probability at most $2^{-\ell}$ of producing the same hash output under
the random choice of function.
The acceptance rule uses the mean of the actual round scores. Evaluating
$W$ on empirical conditional probabilities normalized separately for
each input pair generally gives a different statistic. The guarantee
below also requires the coefficients to be fixed before these rounds,
not chosen retrospectively from the scored data.

The public transcript $\mathcal T$ contains all inputs and messages,
the acceptance flag and the hash seed. It may also contain the final
counts of each tuple $(x,y,m,b)$, or a function of those counts.
Individual outcomes and scores are not disclosed. The aggregate counts
still reveal information about the outputs, which is accounted for in
the finite-size correction below.

\Needspace{10\baselineskip}
\begin{corollary}[Composable private randomness]
\label{cor:privacy-extraction}
For the protocol in Section~\ref{sec:privacy-sequential}, including its communication, isolation and
disclosure conditions, and for all sufficiently large $L$, there is a
fixed output length $\ell$ satisfying
\[
 \begin{aligned}
  \ell&\ge Lr(w_0)-O(\sqrt L),\\
  r(w_0)&=1-h_2\!\left(\frac12+
       \frac{2w_0}{Nn(n-1)}\right).
 \end{aligned}
\]
The constant in $O(\sqrt L)$ depends only on the fixed alphabets,
coefficients, acceptance threshold and $\varepsilon$, not on
the device dimensions.

Let $\sigma^{\mathrm{acc}}_{KE\mathcal T}$ be the subnormalized final
state on acceptance, whose trace is the acceptance probability, and
let $\sigma^{\mathrm{acc}}_{E\mathcal T}
=\tr_K\sigma^{\mathrm{acc}}_{KE\mathcal T}$.
The extracted string satisfies
\begin{equation}
 \frac12
 \left\|\sigma^{\mathrm{acc}}_{KE\mathcal T}
 -\frac{\id_K}{2^\ell}\otimes\sigma^{\mathrm{acc}}_{E\mathcal T}\right\|_1
 \le\varepsilon.
 \label{eq:privacy-security}
\end{equation}
\end{corollary}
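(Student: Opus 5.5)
The proof combines generalised entropy accumulation (GEAT) \cite{PR_Metger2024}, driven by the single-round bound of Theorem~\ref{thm:privacy}, with the quantum leftover hash lemma \cite{PR_Tomamichel2011}.

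\textbf{Step 1: rounds as channels.} Each round $i$ is a channel $\mathcal M_i$ acting on the device memories and the adversary's register $E_{i-1}$. It outputs the private bit $B_i$, the public data $C_i=(X_i,Y_i,M_i)$, and the updated registers. The score $S_i$ is a deterministic function of $(C_i,B_i)$. The non-signalling condition of GEAT must hold: the marginal on $C_i$ and the adversary's systems must be producible without touching the receiver's memory. I will check this from three facts. The inputs are fresh and independent. The receiver returns $B_i$ only to the controller. The receiver's memory is isolated or erased afterwards. The message $M_i$ comes from the sender, so the adversary side, with the message included, never depends on the receiver's memory.

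\textbf{Step 2: min-tradeoff function.} For a round state with conditional probabilities $p_i$, Theorem~\ref{thm:privacy} gives $H(B_i\mid E X_iY_iM_i)\ge g(\delta_i)$ with $g(\delta)=1-h_2(\tfrac12+2\delta/(Nn(n-1)))$. Because of \eqref{eq:privacy-witness}, $\delta_i\ge [W(p_i)]_+=[\mathbb E S_i]_+$. The function $g$ is increasing in $\delta$ on the allowed range, and this range is bounded by $Nn(n-1)/4$. It is also convex, as $\delta\mapsto -h_2(\frac12+c\delta)$ is convex. GEAT needs an affine min-tradeoff function of the score distribution. I therefore take the tangent of $g$ at $w_0$, namely $f(q)=g(w_0)+g'(w_0)(\mathbb E_q S - w_0)$. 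By convexity this tangent lower-bounds $g([\mathbb E_q S]_+)$ at least where $\mathbb E_q S\ge0$. For the remaining scores I will cap or clip the tangent, which is the standard fix-up. Its gradient is bounded in terms of $\max|w|$, $N$, $n$ and $g'(w_0)$, and $g'(w_0)$ is finite because $w_0<Nn(n-1)/4$. The hypothesis on $w_0$ is used exactly here.

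\textbf{Step 3: accumulation and leakage.} On the acceptance event, the average score is at least $w_0$, so $f\ge g(w_0)=r(w_0)$. GEAT then gives
\[
 H_{\min}^{\varepsilon_s}(B_1\cdots B_L\mid E\,C_1\cdots C_L)_{\sigma^{\mathrm{acc}}}\ge L\,r(w_0)-c\sqrt L-O(\log(1/\Pr[\mathrm{acc}])).
\]
The standard $\varepsilon$-security statement treats small acceptance probability by the usual case split: if $\Pr[\mathrm{acc}]\le\varepsilon$, the bound \eqref{eq:privacy-security} holds trivially. The correction constant $c$ depends only on the alphabet sizes, the gradient of $f$, and the binary output dimension, and not on device dimensions. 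The remaining parts of the transcript $\mathcal T$ are handled by the chain rule for smooth min-entropy, which costs at most $\log$ of the number of possible values. The acceptance flag costs one bit. The counts of tuples $(x,y,m,b)$ take at most $(L+1)^{2Nn\abs{\mathcal M}}$ values, a cost of $O(\log L)\subset O(\sqrt L)$. The seed is independent.

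\textbf{Step 4: extraction.} I will apply the leftover hash lemma against quantum side information with $\ell=\lfloor H_{\min}^{\varepsilon_s}-2\log(1/\varepsilon_h)\rfloor$ and choose $\varepsilon_s,\varepsilon_h$ such that $2\varepsilon_s+\varepsilon_h\le\varepsilon$. This yields \eqref{eq:privacy-security} with $\ell\ge Lr(w_0)-O(\sqrt L)$.

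\textbf{Main obstacle.} I expect Step 2 together with the GEAT conditions to be hardest. Theorem~\ref{thm:privacy} is stated for the scenario of Section~\ref{sec:privacy} with an arbitrary shared state. I must confirm that it applies to every round channel conditioned on the past, where the "initial state" of the round is the memory state and may be entangled with earlier outputs. Conditioning on earlier $B$'s is not allowed, but the conditioning in the GEAT min-tradeoff only involves $E$ and the public registers. I would handle this by purifying the past into $E$, which makes the round's pre-state an arbitrary shared state to which the theorem applies verbatim.
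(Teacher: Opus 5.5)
Your proposal is correct and follows essentially the same route as the paper: generalised entropy accumulation with the tangent of $r$ at $w_0$ as the affine min-tradeoff function (finite slope because $w_0<Nn(n-1)/4$), Theorem~\ref{thm:privacy} applied to every round state with the past purified into $E$, the chain rule for the at most $(L+1)^{2Nn\abs{\mathcal M}}$ count values, and two-universal hashing with a case split on the acceptance probability. Two small adjustments. First, no clipping is needed, and a clipped function would no longer be affine as generalised entropy accumulation requires: $t\mapsto 1-h_2\bigl(\frac12+2[t]_+/(Nn(n-1))\bigr)$ is convex on its whole domain, so the tangent at $w_0$ already lower-bounds it for negative scores. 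Second, the test register in that theorem must contain $B_i$; the paper arranges this by taking $C_i=(X_i,Y_i,M_i,B_i)$ and keeping copies of $X_i,Y_i,M_i$ in $E_i$, so that $C_i$ can be reconstructed from $B_i$ and $E_i$.
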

\noindent The proof is given in Appendix~\ref{app:privacy-composability}.

The rate $r(w_0)$ is a lower bound on the asymptotic output length
per round in accepted runs. Choosing a numerical output length
for a given $L$ also requires evaluating the finite-size correction.

The ideal state in \eqref{eq:privacy-security} contains a uniform
string independent of the adversary and the public transcript.
Subnormalization weights the distance by the probability of acceptance.
For the complete protocol, including abort, replacing the extracted
string by independent uniform bits changes the probability of any event
obtained by subsequent processing of $K,E,\mathcal T$ by at most
$\varepsilon$. The receiver's memory and unreleased raw data must remain
inaccessible to the adversary. The inputs and hash seed require independent
randomness, so this is not a claim of net randomness expansion.

In quantum key distribution with an untrusted receiver, compatibility
of the effective receiver measurements, including the preceding channel,
prevents the observed statistics alone from certifying a secret key
\cite[Sec.~5]{PR_Masini2024}. The private randomness guarantee proved here
does not establish a shared key, which would require a separate protocol
and security analysis.

\section{Conclusion and outlook}
\label{sec:outlook}
Union maximal leakage tests measurement incompatibility in experiments
with bounded classical communication. A linear program evaluates its
exponential from the observed probabilities, and this value determines the minimum
number of possible message values needed for a compatible explanation
with unrestricted nonsignalling assistance.

For a fixed quantum state and receiver measurements, a violation is
possible exactly when two-outcome sender measurements can produce
Bell nonlocality with one of the receiver's measurement families. This characterizes
both the incompatibility detected by the test and its limitations, under
the operations specified in Theorem~\ref{thm:operationalfrontier}.

In quantum theory, for binary receiver outcomes, the violation also
bounds the conditional output entropy against the communicating sender and any external
adversary. Together with entropy accumulation and privacy amplification,
this gives composable randomness extraction for devices with memory.

The exact simulation result permits arbitrary nonsignalling resources.
Restricting the compatible explanation to quantum resources can require
more communication, even for probabilities that admit a quantum realization
(Appendix~\ref{app:quantum-converse-gap}). Incorporating this restriction
into a tractable stronger test is an open question.

Semidefinite programming methods for bounding conditional entropy
\cite{PR_BrownFawzi2024}, combined with relaxations for sequential quantum
correlations \cite{PR_Bowles2020}, offer a possible way to use the full
observed distribution for privacy certification. A related question is
which features of these data, beyond \(\Acup(p)\), can strengthen the
entropy guarantee.

\section*{Acknowledgements}
This work grew out of the CQT Hackamonth 2026. The author thanks
Valerio Scarani for suggesting a different problem that led to this work.
She is grateful to the Centre
for Quantum Technologies, Singapore, for its hospitality and to the
event organizers.

She also thanks Ritu Dhaulakhandi for helpful discussions on the test's
cryptographic content, Otfried G{\"u}hne for suggesting the connection
with violations of Bell inequalities, Andreas Winter for discussions on
possible applications, Christoph Hirche for comments on the final
manuscript, and Rutvij Bhavsar, Tim Achenbach, and Mu-En Liu for interesting discussions.

During this work, the author was supported by the Federal Ministry of
Research, Technology and Space (BMFTR), Germany, through the QC service
center QUICS (grant no.~13N17418).

\section*{Disclosure of AI usage}
OpenAI's ChatGPT-6, accessed through ChatGPT in September 2026, was used
to substantially improve the author's initial work. In particular, it was
used to search for prior work, generalize statements interactively,
identify applications of the test, and assist in preparing the manuscript.
The research idea originated with the author, who takes responsibility
for the paper's content.
\printbibliography

\clearpage
\appendix
\section*{Appendix}
Statements from the main text are recalled below with their original
numbers and titles.

\section{Guessing bounds and linear programming}
\label{app:test}
\subsection{Compatibility bound and simulation cost}
\label{app:budgets}

\begin{recalled}{Theorem}{thm:communication}{Compatibility bound and simulation cost}
\begin{enumerate}
\item[(i)] In any nonsignalling operational theory,
compatible receiver measurements satisfy
\[
 \Acup(p)\le N P_g(X\mid B).
 \tag{\ref*{eq:informationbound}}
 \]
For the classical and quantum communication scenarios specified above,
\[
 N P_g(X\mid B)\le
 \begin{cases}
 \abs{\mathcal M},&\text{classical message},\\
 d,&\text{quantum system, separable shared state},\\
 d^2,&\text{quantum system, arbitrary shared state}.
 \end{cases}
 \tag{\ref*{eq:budgets}}
 \]
The classical case allows arbitrary nonsignalling assistance.
\item[(ii)] For every conditional distribution
\(p\), the least message-alphabet size of a realization satisfying
Assumption~\ref{ass:communication}, with some nonsignalling shared
resource and compatible receiver measurements, is \(\lceil\Acup(p)\rceil\).
\end{enumerate}
\end{recalled}

\begin{proof}[Proof of Theorem~\ref{thm:communication}]
\emph{Part (i): from a parent measurement to a guessing strategy.}
Conditional on each outcome of the parent measurement, independently
sample the classical response for every setting and collect the results in \(s\in\Omega\).
This defines an allowed measurement on \(B\), whose distribution
\(q(s\mid x)\) has the observed marginals and hence belongs to
\(\mathcal C(p)\). Assigning each \(s\) to an input maximizing
\(q(s\mid x)\) gives a valid guessing strategy, so
\[
 \Acup(p)\le\sum_s\max_x q(s\mid x)\le N P_g(X\mid B).
\]
It remains to bound the receiver's probability of guessing the whole input
\(X\). The following bounds hold independently of whether these
measurements are compatible.

\emph{Classical messages.} Fix a guessing measurement for each
candidate message \(m\). Before communication, these choices define a
nonsignalling correlation \(P(a,g\mid x,m)\), where \(a\) is the sender's
message outcome and \(g\in\{1,\ldots,N\}\) is the receiver's guess. Here \(m\)
is an independently chosen measurement label. Nonsignalling gives
\(\sum_aP(a,g\mid x,m)=P_B(g\mid m)\), independently of \(x\).
In the actual protocol, the receiver uses the received value \(a\) to select
the measurement \(m=a\). This guessing strategy \(\mathsf G\) satisfies
\[
 N\Pr_{\mathsf G}(\widehat X=X)
 =\sum_{x,m}P(a=m,g=x\mid x,m)
 \le\sum_{x,m}P_B(g=x\mid m)=\abs{\mathcal M}.
\]
The inequality bounds each joint event by the receiver's marginal, obtained by
summing over all of the sender's outcomes.
Taking the supremum over guessing measurements proves the first bound.
This includes any shared randomness in the nonsignalling resource.

\emph{Quantum messages.} For a transmitted system \(C\) with
\(\dim\mathcal H_C=d\), Proposition~1 of
Ref.~\cite{PR_Tavakoli2021EA} bounds the optimal probability of guessing
the sender's input after transmission. For a uniform input with \(N\)
possible values, it gives \(P_g\le d/N\) without initial shared
entanglement and \(P_g\le d^2/N\) with arbitrary shared entanglement.
These are the two quantum cases of \eqref{eq:budgets}. We give the
operator argument to make their role explicit.
Write \(B_0\) for the receiver's initial quantum system and
\(\tau^x_{CB_0}\) for its joint state with the transmitted system \(C\).
Its marginal
\(\tr_C\tau^x_{CB_0}=\rho_{B_0}\) is independent of \(x\).
Every pure state on \(CB_0\) has a Schmidt decomposition
\[
 \ket{\psi}=\sum_{j=1}^r\sqrt{\lambda_j}\,
 \ket{u_jv_j},\qquad r\le d.
\]
Cauchy--Schwarz gives
\[
 \ketbra{\psi}
 \preceq r\sum_{j=1}^r\lambda_j\ketbra{u_jv_j}
 \preceq d\,\id_C\otimes\tr_C\ketbra{\psi}.
\]
Applying this inequality to pure-state decompositions and taking
trace-norm limits gives
\(\tau^x_{CB_0}\preceq d\,\id_C\otimes\rho_{B_0}\),
also when \(B_0\) is infinite-dimensional.
Without initial shared entanglement, \(\tau^x_{CB_0}\) is separable.
Bounding each normalized \(C\)-factor in a product-state mixture by
\(\id_C\) gives the stronger inequality
\(\tau^x_{CB_0}\preceq\id_C\otimes\rho_{B_0}\);
trace-norm continuity extends it to arbitrary separable states.

For any guessing POVM \((G_x)_x\), summing and using
\(\sum_xG_x=\id_C\otimes\id_{B_0}\) gives
\[
 \sum_x\tr(G_x\tau^x_{CB_0})
 \le d\,\tr\!\left[
       \Bigl(\sum_xG_x\Bigr)(\id_C\otimes\rho_{B_0})
       \right]=d^2.
\]
The separable bound gives \(d\) by the same calculation.
Subsequent local processing cannot improve guessing, so taking the
supremum proves \eqref{eq:budgets}.

\emph{Part (ii): exact simulation.} Choose an optimal joint distribution \(q\in\mathcal C(p)\). Cubitt
\emph{et al.}'s channel-simulation theorem realizes \(q(s\mid x)\)
using a classical message with
\(\lceil\Lambda(q)\rceil=\lceil\Acup(p)\rceil\) possible values and
nonsignalling assistance \cite[Theorem~16]{PR_Cubitt2011}.
The receiver obtains \(s\) before choosing \(y\) and reports \(s_y\), so the
resulting measurements are compatible. The lower bound follows from
part~(i).
\end{proof}

\Needspace{14\baselineskip}
\subsection{Derivation of the linear program}
\label{app:lp}
\begin{recalled}{Proposition}{prop:lp}{Linear-program formulation}
The quantity \(\Acup(p)\) is the least \(\Lambda(q)\) among all finite
classical simulations \eqref{eq:simulation}. With
\(\Omega=\prod_y\mathcal B_y\) the joint outcome space, it is the
optimum of the following LP over nonnegative variables \(q(s\mid x)\)
and \(t_s\):
\[
\begin{aligned}
 \Acup(p)&=\min_{q,t}\sum_{s\in\Omega}t_s\\
 \textnormal{subject to}\quad
 q(s\mid x)&\le t_s &&(\forall x,s),\\
 \sum_{s\in\Omega:\,s_y=b}q(s\mid x)&=p(b\mid x,y)
                   &&(\forall x,y,b).
\end{aligned}
\]
\end{recalled}

\begin{proof}[Proof of Proposition~\ref{prop:lp}]
For a classical channel \(T(s\mid z)\), nonnegativity and normalization give
\begin{equation}
 \sum_s\max_x\sum_z T(s\mid z)q(z\mid x)
 \le\sum_{s,z}T(s\mid z)\max_xq(z\mid x)
 =\Lambda(q).
 \label{eq:dataprocessing}
\end{equation}
Take \(T(s\mid z)=\prod_y p_B(s_y\mid y,z)\).
Then \(\widetilde q(s\mid x)=\sum_zT(s\mid z)q(z\mid x)\) is normalized,
and summing over all coordinates except \(s_y\) recovers
\eqref{eq:simulation}, so \(\widetilde q\in\mathcal C(p)\) and
\(\Lambda(\widetilde q)\le\Lambda(q)\) by \eqref{eq:dataprocessing}.
Conversely, a joint distribution in \(\mathcal C(p)\) defines a simulation with deterministic responses
\(p_B(b\mid y,s)=\mathbf 1\{b=s_y\}\).
Finally, for fixed \(q\) the smallest feasible value of \(t_s\) is
\(\max_xq(s\mid x)\), giving the LP.
\end{proof}

\subsection{Properties and linear witnesses}
\label{app:basic}
\begin{proposition}[Basic properties and linear witnesses]
\label{prop:basic}
\begin{enumerate}
\item[(i)] Independently of the physical realization, every conditional
distribution \(p(b\mid x,y)\) satisfies
\begin{equation}
 1\le\max_y\sum_b\max_x p(b\mid x,y)
 \le\Acup(p)\le\min\{N,\abs{\Omega}\}.
 \label{eq:basicbounds}
\end{equation}
For \(n=1\), \(\Acup(p)=\sum_b\max_xp(b\mid x,1)\).
Moreover,
\[
 \Acup(p)=1\quad\Longleftrightarrow\quad
 p(b\mid x,y)=p(b\mid x',y)\quad\text{for all }x,x',y,b.
\]
\item[(ii)] If \(p'\) is obtained from \(p\) by discarding measurements or
postprocessing their outcomes independently of \(x\), then
\[
 \Acup(p')\le\Acup(p).
\]
\item[(iii)] The dual representation is
\begin{equation}
\begin{aligned}
 \Acup(p)=\max_w\quad
   &\sum_{x,y,b}w_{xyb}\,p(b\mid x,y)\\[0.5ex]
 \textnormal{subject to}\quad
   &\sum_{x=1}^N\left[\sum_{y=1}^n w_{xy\,s_y}\right]_+
       \le1\qquad\textnormal{for every }s\in\Omega,
\end{aligned}
\label{eq:dualcompact}
\end{equation}
where \(w_{xyb}\in\mathbb R\).
Each feasible \(w\) defines a linear witness bounded by \(\Acup\)
for every conditional distribution \(p\), and the maximum is attained.
In particular, \(\Acup\) is convex: for conditional distributions \(p,p'\) with the same alphabets and
\(0\le\theta\le1\),
\[
 \Acup\bigl(\theta p+(1-\theta)p'\bigr)
 \le\theta\Acup(p)+(1-\theta)\Acup(p').
\]
\end{enumerate}
\end{proposition}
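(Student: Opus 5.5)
The plan is to establish the three parts in the order (iii), (i), (ii). Part (iii) is the engine for the others.

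\emph{Part (iii).} I will start from the LP of Proposition~\ref{prop:lp} and form its Lagrangian. The marginal equalities get free multipliers \(w_{xyb}\), and the constraints \(q(s\mid x)\le t_s\) get multipliers \(\mu_{xs}\ge0\). Stationarity in \(t_s\) gives \(\sum_x\mu_{xs}\le1\), and stationarity in \(q(s\mid x)\ge0\) gives \(\sum_y w_{xy s_y}\le\mu_{xs}\). Eliminating \(\mu\) by taking \(\mu_{xs}=[\sum_y w_{xys_y}]_+\) yields exactly the constraint in \eqref{eq:dualcompact}. The primal is feasible with a finite optimum (the minimum is attained, as shown after \eqref{eq:operationalcosts}), so LP strong duality gives equality and attainment of the dual maximum.

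Weak duality also follows directly, for every \(p\). For any \(q\in\mathcal C(p)\) we have
\(\sum_{x,y,b}w_{xyb}p(b\mid x,y)=\sum_{s}\sum_x q(s\mid x)\sum_y w_{xys_y}\le\sum_s\max_x q(s\mid x)\sum_x[\sum_y w_{xys_y}]_+\le\Lambda(q)\).
Convexity then follows because \(\Acup\) is a pointwise maximum of linear functions of \(p\).

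\emph{Part (i).} The upper bound \(\Acup\le N\) comes from \(\sum_s\max_x q\le\sum_s\sum_x q=N\). The bound \(\Acup\le\abs{\Omega}\) comes from \(q\le1\). For the lower bound, I will take \(w_{xyb}=\mathbf 1\{y=y_0,\,x=x^*(b)\}\), where \(x^*(b)\) maximizes \(p(b\mid x,y_0)\). This \(w\) is dual feasible, since for each \(s\) exactly one term equals one. It gives the value \(\sum_b\max_xp(b\mid x,y_0)\), and this value is at least \(\sum_b p(b\mid x,y_0)=1\).

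For \(n=1\), \(\Omega=\mathcal B_1\) and \(q=p\) is forced, so equality holds. For the characterization of \(\Acup=1\): if \(p\) is independent of \(x\), the product distribution of the marginals is the same for every \(x\) and gives \(\Lambda=1\). Conversely, if \(\Acup=1\), then the lower bound forces \(\sum_b\max_x p(b\mid x,y)=1\) for every \(y\). Since each column sums to one, this means \(\max_xp(b\mid x,y)=p(b\mid x',y)\) for all \(x'\).

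\emph{Part (ii).} Discarding measurements corresponds to marginalizing an optimal \(q\) onto fewer coordinates. Postprocessing outcome \(b\) of setting \(y\) by a channel \(R_y(b'\mid b)\) independent of \(x\) corresponds to applying \(T(s'\mid s)=\prod_yR_y(s'_y\mid s_y)\). In both cases the map is a classical channel on \(s\), so \eqref{eq:dataprocessing} gives \(\Lambda(Tq)\le\Lambda(q)\). It remains to check that \(Tq\in\mathcal C(p')\), which is a routine marginal computation.

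The main obstacle is the careful dualization in (iii). The subtle points are getting the sign conventions right, the positive-part elimination of \(\mu\), and the fact that the constraint must hold for all \(s\) rather than only on the support. Everything else reduces to choosing simple dual witnesses or to the data-processing inequality already proved.
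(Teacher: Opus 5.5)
Your proof is correct and follows the paper's route: the same dualization of the marginal equalities with positive-part elimination and strong duality for (iii), the same counting argument for the upper bounds $N$ and $\abs{\Omega}$, and the data-processing inequality \eqref{eq:dataprocessing} for (ii). The paper keeps $0\le q(s\mid x)\le t_s$ in the domain rather than introducing multipliers $\mu_{xs}$, which is equivalent. The only real deviations are two: you get the single-setting lower bound from an explicit indicator witness rather than by marginalizing $q$, and you derive the converse of the $\Acup(p)=1$ characterization from that lower bound rather than from $q(\cdot\mid x)=t$ at an optimum. Both variants are valid, and the second is slightly shorter.
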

\begin{proof}[Proof of Proposition~\ref{prop:basic}]
\emph{(i)} Taking a marginal of a joint probability distribution and applying
\eqref{eq:dataprocessing} gives the single-measurement lower bound, which is
at least one by normalization. Each maximum \(\max_x q(s\mid x)\)
is at most one and at most \(\sum_x q(s\mid x)\). Summing over \(s\)
gives the upper bounds \(\abs{\Omega}\) and \(N\), respectively.
For one measurement, the joint distribution is the observed channel itself.
If the optimum is one, take a feasible \((q,t)\) with \(\sum_s t_s=1\).
For every \(x\), the entries of \(t-q(\cdot\mid x)\) are nonnegative
and sum to zero, forcing \(q(\cdot\mid x)=t\). All observed probabilities
are independent of \(x\).
Conversely, if \(p(b\mid x,y)\) is independent of \(x\), the same joint distribution
can be used for every input, giving \(\Lambda(q)=1\).

\emph{(ii)} Deleting or postprocessing measurements changes only the responses
\(p_B(b\mid y,z)\) in \eqref{eq:simulation}, leaving the encoding \(q\) and
\(\Lambda(q)\) unchanged. Minimizing over simulations gives the inequality.

\emph{(iii)} Dualize the marginal equalities in \eqref{eq:lp} with
real multipliers \(w_{xyb}\), retaining \(0\le q(s\mid x)\le t_s\)
in the optimization domain. For fixed \(t\), minimizing the Lagrangian
over \(q\) gives
\[
 \sum_{x,y,b}w_{xyb}p(b\mid x,y)
 +\sum_s t_s\left(1-\sum_x\left[\sum_y w_{xy\,s_y}\right]_+\right).
\]
The coefficient of \(q(s\mid x)\) is \(-\sum_yw_{xy\,s_y}\).
Its minimizing value is \(t_s\) when this coefficient is negative and
zero when it is positive. When it is zero, either choice is optimal.
This gives the positive parts above. The infimum over \(t\ge0\) is
finite exactly under the constraints in \eqref{eq:dualcompact}. Weak duality gives the witness
bound. The primal LP is feasible and bounded, so strong LP duality gives
equality and attainment.
Convexity follows because \(\Acup\) is a maximum of linear functionals
over a feasible set independent of \(p\).
\end{proof}
The upper bound in part~(i) uses only the input and outcome alphabets;
it applies even to distributions that no allowed experiment can realize.
Theorem~\ref{thm:communication} instead uses the physical communication
constraint and compatibility. For example, with four sender inputs and
two binary receiver measurements, part~(i) gives \(\Acup\le4\), while
compatible measurements with one communicated bit obey \(\Acup\le2\).

For the mixture in part~(iii), an input-independent coin chooses the
protocol producing \(p\) with probability \(\theta\), and the one
producing \(p'\) otherwise. Pooling the trials gives
\(\theta p+(1-\theta)p'\). Reporting the coin outcome would instead
give a different distribution with an enlarged outcome alphabet.
The convexity statement concerns \(\Acup\), the exponential of union
maximal leakage. It does not assert convexity of \(\log_2\Acup\).

Normalization allows constants to be absorbed into the witness coefficients.
For the uniform two-bit random-access task, write $x=(x_1,x_2)\in\{0,1\}^2$
and $y\in\{1,2\}$, with binary outcome $b$. The coefficients
$w_{xyb}=\mathbf1\{b=x_y\}-1/2$ satisfy \eqref{eq:dualcompact}:
for each $s$, only $x=s$ gives a positive term, equal to one. Hence
\[
 8p_{\mathrm{succ}}-4\le\Acup(p),\qquad
 p_{\mathrm{succ}}:=\frac18\sum_{x,y}p(b=x_y\mid x,y).
\]
A success probability above $3/4$ therefore violates the one-bit bound.

\subsection{Communication through a noisy classical channel}\label{app:noisychannel}
Replace the noiseless message by a single use of a specified classical
channel with transition probabilities \(\mathcal N(m\mid u)\)
\cite[Sec.~II]{PR_Cubitt2011}. The sender supplies input \(u\).
Only the channel output \(m\) is delivered to the receiver before the
measurement setting is chosen.
The channel's noise is independent of the shared resource and device
randomness, and neither device receives additional side information about that noise.
All input-dependent communication passes through this channel.

\begin{proposition}[Noisy-channel bound]\label{prop:noisychannel}
In any nonsignalling operational theory, compatible receiver measurements
for each received message \(m\) satisfy
\begin{equation}
\Acup(p)\le \Lambda(\mathcal N)=\sum_m\max_u \mathcal N(m\mid u).
 \label{eq:noisychannel}
\end{equation}
In quantum theory, the same bound holds for separable resources and
arbitrary measurements.
For independent channels,
\(\Lambda(\mathcal N_1\otimes \mathcal N_2)=\Lambda(\mathcal N_1)\Lambda(\mathcal N_2)\).
\end{proposition}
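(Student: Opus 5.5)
The proof follows the pattern of Theorem~\ref{thm:communication}(i): first reduce the leakage of a compatible explanation to a guessing probability, then bound that guessing probability by \(\Lambda(\mathcal N)\) using nonsignalling. For each received \(m\), compatibility provides a parent measurement. As in the proof of part~(i), sampling responses for every setting conditional on \((m,\text{parent outcome})\) gives a joint distribution \(q(s\mid x)\in\mathcal C(p)\). The map \(x\mapsto s\) is then a measurement on the receiver's holdings after the channel, so
\[
 \Acup(p)\le\sum_s\max_x q(s\mid x)\le\sum_{s}\max_x\Pr(S=s\mid x).
\]
It therefore remains to bound \(\sum_g\max_x\) of any guess-type variable \(G\) that the receiver produces from its resource and the channel output \(m\). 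Equivalently, we need \(N P_g(X\mid B)\le\Lambda(\mathcal N)\). Working with \(\sum_g\max_x\) for general \(G\) will make the argument cleaner.

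For the channel bound, fix a family of receiver measurements indexed by the candidate output \(m\). Before communication, the shared resource gives a nonsignalling correlation \(P(u,g\mid x,m)\), where \(u\) is the sender's channel input and \(g\) the receiver's outcome under measurement \(m\). Because the channel noise is independent of everything else, the actual protocol gives
\[
\Pr(G=g\mid x)=\sum_{m,u}\mathcal N(m\mid u)P(u,g\mid x,m).
\]
We bound \(\mathcal N(m\mid u)\le\max_{u'}\mathcal N(m\mid u')=:c_m\) and then sum over \(u\). Nonsignalling gives \(\sum_uP(u,g\mid x,m)=P_B(g\mid m)\), which is independent of \(x\). Hence
\[
\sum_g\max_x\Pr(G=g\mid x)\le\sum_{m}c_m\sum_gP_B(g\mid m)=\Lambda(\mathcal N).
\]
Combined with the first step, this proves \eqref{eq:noisychannel}. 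For the quantum separable case with arbitrary receiver measurements, the approach is that of Lemma~\ref{lem:classicalcommunication}. A shared-randomness protocol with channel \(\mathcal N\) has the form \(\sum_\lambda p(\lambda)\sum_{u,m}p_A(u\mid x,\lambda)\mathcal N(m\mid u)p_B(b\mid m,y,\lambda)\). Sampling the outcomes independently conditional on \((m,\lambda)\) gives a \(q\) with \(\sum_s\max_x q\le\sum_{\lambda}p(\lambda)\sum_m\max_x\sum_u p_A(u\mid x,\lambda)\mathcal N(m\mid u)\le\Lambda(\mathcal N)\). Separable states reduce to this case by the same approximation and closedness argument as in that lemma.

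For multiplicativity, \(\Lambda(\mathcal N_1\otimes\mathcal N_2)=\sum_{m_1,m_2}\max_{u_1,u_2}\mathcal N_1(m_1\mid u_1)\mathcal N_2(m_2\mid u_2)\). The maximum of a product of nonnegative factors in independent variables factorizes, and the double sum then splits. I expect the main obstacle to be the first step: the guessing variable has to be defined on what the receiver holds after the noisy channel. The noise must not give the devices extra side information, and the conditioning on the channel output has to be compatible with nonsignalling. The stated noise-independence assumption is exactly what makes the displayed decomposition valid.
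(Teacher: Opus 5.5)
Your proposal is correct and follows the paper's proof in all three parts. It uses the same bound $\mathcal N(m\mid u)\le\max_{u'}\mathcal N(m\mid u')$, followed by summing out $u$ via nonsignalling; you bound $\sum_g\max_x\Pr(G=g\mid x)$ directly instead of bounding $N P_g(X\mid B)$ and then invoking Theorem~\ref{thm:communication}(i), but the two are equivalent. It also uses the same shared-randomness simulation with output $(\lambda,m)$ for finite mixtures of product states, and the same factorisation of maxima for product channels. The one small difference is the limit to arbitrary separable states: the paper argues that every linear witness of Proposition~\ref{prop:basic}(iii) stays bounded by $\Lambda(\mathcal N)$ in the limit. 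Your closedness route instead needs the fact, which you should state explicitly, that for fixed $\mathcal N$ these noisy-channel models again form the convex hull of finitely many deterministic encoder--decoder pairs; this holds because the model is bilinear in the local responses.
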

\begin{proof}
For each possible channel output \(m\), fix a guessing measurement
on the receiver's systems. Before communication, these choices define a
nonsignalling correlation \(P(u,g\mid x,m)\) between the sender's channel
input \(u\) and the receiver's guess \(g\). Here \(m\) is an independently
chosen measurement label, not yet the channel output.
Nonsignalling gives
\(\sum_uP(u,g\mid x,m)=P_B(g\mid m)\), independently of \(x\).
In the actual protocol, the channel output selects the receiver's measurement.
For this guessing strategy \(\mathsf G\), independence of the channel
noise yields
\[
\begin{aligned}
 N\Pr_{\mathsf G}(\widehat X=X)
 &=\sum_{x,m,u}\mathcal N(m\mid u)P(u,g=x\mid x,m)\\
 &\le\sum_m\bigl(\max_u \mathcal N(m\mid u)\bigr)\sum_x P_B(g=x\mid m)
 =\Lambda(\mathcal N).
\end{aligned}
\]
The inequality replaces each channel probability by its largest value
at fixed \(m\), and each \(P_B(\cdot\mid m)\) is normalized. Taking the supremum
over guessing measurements gives \(N P_g(X\mid B)\le\Lambda(\mathcal N)\).
Theorem~\ref{thm:communication}(i) therefore proves the bound for
compatible receiver measurements.

For a finite mixture of product states, the state label \(\lambda\)
gives a shared-randomness model with local responses
\(p_A(u\mid x,\lambda)\) and \(p_B(b\mid m,y,\lambda)\).
The distribution \(p(\lambda)\) is independent of the inputs.
The simulation channel with output \(z=(\lambda,m)\) obeys
\[
 q(\lambda,m\mid x)=p(\lambda)
       \sum_u\mathcal N(m\mid u)p_A(u\mid x,\lambda)
 \le p(\lambda)\max_u\mathcal N(m\mid u).
\]
It simulates arbitrary receiver measurements and satisfies
\(\Lambda(q)\le\Lambda(\mathcal N)\), which bounds \(\Acup(p)\) by
Proposition~\ref{prop:lp}. An arbitrary separable state is a trace-norm
limit of finite mixtures of product states. The resulting probabilities
converge. Every linear witness in Proposition~\ref{prop:basic}(iii)
remains bounded by \(\Lambda(\mathcal N)\) in the limit; maximizing over
these witnesses extends the bound to the limiting probabilities.
Finally, the maxima for a product channel separate:
\[
 \sum_{m_1,m_2}\max_{u_1,u_2}
 \mathcal N_1(m_1\mid u_1)\mathcal N_2(m_2\mid u_2)
 =\left(\sum_{m_1}\max_{u_1}\mathcal N_1(m_1\mid u_1)\right)
  \left(\sum_{m_2}\max_{u_2}\mathcal N_2(m_2\mid u_2)\right).
\]
\end{proof}
The communication value \(\Lambda(\mathcal N)\) depends on the transition
probabilities of the physical channel \cite{PR_Chitambar2023CV}.
For a noiseless channel with \(2^k\) symbols, \(\Lambda(\mathcal N)=2^k\).
A binary symmetric channel with crossover probability \(0\le\varepsilon\le1\)
has \(\Lambda(\mathcal N)=1+\abs{1-2\varepsilon}\), so an ideal \(k\)-bit
message together with an independent binary symmetric side channel gives
the threshold \(2^k(1+\abs{1-2\varepsilon})\).
The bound requires a model of the physical channel, including its noise.
The observed conditional distribution of the received message given
\(x\) does not by itself specify this model. For example, if
\(m=x\oplus\lambda\) for a binary input \(x\) and a uniformly random
shared bit \(\lambda\), the message is independent of \(x\), but the
receiver can recover \(x\) using \(\lambda\).

The role of contextuality in entanglement-assisted communication
through noisy classical channels is studied in
Ref.~\cite{PR_YadavalliKunjwal2022}.

\Needspace{8\baselineskip}
\subsection{Confidence regions}
\label{app:confidence}
\begin{corollary}[Confidence-region certificate]
\label{cor:confidence}
Let \(p_{\mathrm{true}}(b\mid x,y)\) denote the experiment's unknown
conditional outcome probabilities, and let \(0<\alpha<1\).
Suppose the data determine a polytope \(\mathcal P\) of conditional
distributions with simultaneous coverage
\(\Pr[p_{\mathrm{true}}\in\mathcal P]\ge1-\alpha\), where the
probability is over the sampled data.
Whenever \(\mathcal P\) is nonempty, the lower confidence bound
\begin{equation}
 \underline\Lambda=\min_{p\in\mathcal P}\Acup(p)
 \label{eq:confidence}
\end{equation}
is the optimum of a linear program. Rejecting the hypothesis
\(\Acup(p_{\mathrm{true}})\le2^k\) only when \(\mathcal P\ne\varnothing\) and
\(\underline\Lambda>2^k\) has false-rejection probability at most
\(\alpha\).
\end{corollary}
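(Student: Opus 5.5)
The argument has two parts. First we show that \(\underline\Lambda\) is the optimum of a single linear program. Then we observe that the test can falsely reject only when the true probabilities lie outside \(\mathcal P\).

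\emph{Linear program.} Write \(\mathcal P=\{p: Ap\le c\}\) as a finite system of linear inequalities in the entries \(p(b\mid x,y)\). Normalization and nonnegativity are either included among these inequalities or added explicitly; adding them does not change \(\mathcal P\), because its elements are conditional distributions. By Proposition~\ref{prop:lp}, \(\Acup(p)\) is the minimum of \(\sum_s t_s\) over \((q,t)\) satisfying \(0\le q(s\mid x)\le t_s\) and the marginal equalities \(\sum_{s:\,s_y=b}q(s\mid x)=p(b\mid x,y)\). These marginal equalities are linear jointly in \((q,t,p)\). Minimizing over \(p\in\mathcal P\) and then over \((q,t)\) is therefore the same as one joint minimization over \((p,q,t)\), subject to the LP constraints in \eqref{eq:lp} together with \(Ap\le c\). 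This joint program is linear.

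\emph{The minimum is attained.} When \(\mathcal P\ne\varnothing\), the joint program is feasible. Indeed, for any \(p\in\mathcal P\), the product distribution \(q(s\mid x)=\prod_y p(s_y\mid x,y)\) with \(t_s=\max_x q(s\mid x)\) is feasible. The objective \(\sum_s t_s\) is bounded below by \(0\). A feasible LP with an objective bounded below attains its optimum. Hence \(\underline\Lambda\) is a minimum, and it equals \(\inf_{p\in\mathcal P}\Acup(p)\). Alternatively, \(\Acup\) is convex by Proposition~\ref{prop:basic}(iii), and \(\mathcal P\) is compact because it is a polytope; this also gives attainment.

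\emph{False-rejection probability.} Suppose the hypothesis holds, so \(\Acup(p_{\mathrm{true}})\le2^k\). Consider any data set on which the test rejects, that is, \(\mathcal P\ne\varnothing\) and \(\underline\Lambda>2^k\). If \(p_{\mathrm{true}}\) belonged to \(\mathcal P\), then \(\underline\Lambda\le\Acup(p_{\mathrm{true}})\le2^k\), which contradicts rejection. So the rejection event is contained in the event \(\{p_{\mathrm{true}}\notin\mathcal P\}\). By the coverage assumption, this event has probability at most \(\alpha\).

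The step needing most care is the first one: checking that the joint formulation is genuinely linear and attains its minimum. The coverage argument itself is routine once \(\underline\Lambda\) is known to be a well-defined minimum over \(\mathcal P\).
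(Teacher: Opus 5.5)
Your proof is correct and follows the paper's argument. The joint minimization over \((p,q,t)\) with the added constraints \(p\in\mathcal P\) is a feasible LP whose objective is bounded below, so it attains its optimum, and under the hypothesis the rejection event lies inside \(\{p_{\mathrm{true}}\notin\mathcal P\}\).

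Only your aside is off. Convexity of \(\Acup\) on a compact polytope does not by itself give attainment, since a convex function can jump up at the boundary (for example \(f(0)=1\) and \(f(u)=u\) on \((0,1]\)). The property that does work is lower semicontinuity, which follows from the dual formula in Proposition~\ref{prop:basic}(iii) expressing \(\Acup\) as a maximum of linear functionals. Your main LP argument already settles attainment, so the proof is unaffected.
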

\begin{proof}
Minimizing jointly over \(p,q,t\) in \eqref{eq:lp}, with the
additional linear constraints \(p\in\mathcal P\), gives the stated LP.
It is feasible when \(\mathcal P\ne\varnothing\) and bounded below by one,
so its minimum is attained. On the coverage event,
\(\underline\Lambda\le\Acup(p_{\mathrm{true}})\).
Under the hypothesis, rejection therefore implies
\(p_{\mathrm{true}}\notin\mathcal P\), an event of probability at most \(\alpha\).
\end{proof}
For a noiseless message of at most \(k\) bits, the lower bound exceeds
\(2^k\) exactly when no distribution in \(\mathcal P\) admits a
compatible explanation with unrestricted nonsignalling assistance and a freely
chosen internal message, by Theorem~\ref{thm:communication}(ii).
This is an interpretation of the chosen confidence region, not an
optimality claim about its statistical power.
For independent identically distributed trials, binomial intervals
with guaranteed coverage for each conditional probability give such
a region when their noncoverage bounds sum to at most \(\alpha\). The union
bound supplies simultaneous coverage, and intersecting the intervals
with normalization gives \(\mathcal P\). Unobserved pairs \((x,y)\)
receive the interval \([0,1]\), while an empty region is inconclusive.
Replacing \(2^k\) by \(\Lambda(\mathcal N)\) gives the noisy-channel certificate.
For correlated or drifting trials, the confidence regions must instead
be valid for that sampling model.

\section{Proof of the characterization for fixed quantum resources}
\label{app:exactfrontier}
\subsection{Joint distributions and detection}
\label{app:frontier-proof}
Write \(s=(s_1,\ldots,s_n)\in\prod_y\mathcal B_y\) for a list of
receiver outcomes, one for each setting. Bell locality has the following
joint-distribution characterization \cite{PR_Fine1982}.

\begin{lemma}[Fine's joint-distribution characterization of Bell locality
\cite{PR_Fine1982}]
\label{lem:localcoupling}
A nonsignalling correlation \(P(a,b\mid x,y)\) with finite alphabets
is Bell-local if and only if there are conditional distributions
\(Q(a,s\mid x)\) satisfying
\begin{equation}
\begin{aligned}
 \sum_{s:\,s_y=b}Q(a,s\mid x)&=P(a,b\mid x,y),\\
 \sum_a Q(a,s\mid x)&=Q_B(s),
\end{aligned}
\label{eq:localcoupling}
\end{equation}
with \(Q_B\) independent of \(x\).
\end{lemma}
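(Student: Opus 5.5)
We prove both directions of Lemma~\ref{lem:localcoupling} directly. Throughout we use the standard form of a local model, with a finite hidden variable: \(P(a,b\mid x,y)=\sum_\lambda p(\lambda)p_A(a\mid x,\lambda)p_B(b\mid y,\lambda)\). Finiteness is available because, as in the proof of Lemma~\ref{lem:classicalcommunication}, local correlations with finite alphabets form the convex hull of finitely many deterministic strategies.

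\emph{Local implies coupling.} Given such a model, we sample receiver outcomes for all settings independently conditional on \(\lambda\). This gives
\[
 Q(a,s\mid x)=\sum_\lambda p(\lambda)\,p_A(a\mid x,\lambda)\prod_y p_B(s_y\mid y,\lambda).
\]
Summing over \(s\) with \(s_y=b\) fixes coordinate \(y\) and marginalizes the others. This recovers \(P(a,b\mid x,y)\). Summing over \(a\) removes \(p_A\) by normalization. The result is \(Q_B(s)=\sum_\lambda p(\lambda)\prod_y p_B(s_y\mid y,\lambda)\), which is independent of \(x\). So both conditions in \eqref{eq:localcoupling} hold.

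\emph{Coupling implies local.} Suppose \(Q\) satisfies \eqref{eq:localcoupling}. We take the hidden variable to be \(\lambda=s\), distributed according to \(Q_B(s)\). We define the sender's response \(p_A(a\mid x,s)=Q(a,s\mid x)/Q_B(s)\) when \(Q_B(s)>0\), and an arbitrary fixed distribution otherwise. This is a valid conditional distribution, since \(\sum_a Q(a,s\mid x)=Q_B(s)\). We define the deterministic receiver response \(p_B(b\mid y,s)=\mathbf 1\{b=s_y\}\).

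The crucial point is that \(Q_B\) does not depend on \(x\). This is exactly what makes \(p(\lambda)\) input independent. We then compute
\[
 \sum_s Q_B(s)\,p_A(a\mid x,s)\,\mathbf 1\{b=s_y\}=\sum_{s:\,s_y=b}Q(a,s\mid x)=P(a,b\mid x,y).
\]
The terms with \(Q_B(s)=0\) contribute nothing, because there \(Q(a,s\mid x)\le Q_B(s)=0\). This gives a local hidden-variable model, so \(P\) is Bell-local.

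The only delicate points are the reduction to a finite hidden variable and the treatment of zero-probability strings, and both are handled above. The nonsignalling hypothesis is not needed beyond being implied by either side.
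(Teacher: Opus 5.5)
Your proof is correct and matches the paper's argument: the forward direction samples every receiver outcome independently given \(\lambda\), and the converse takes \(s\sim Q_B\) as the input-independent hidden variable, with sender response \(Q(a,s\mid x)/Q_B(s)\) and the receiver reporting \(s_y\). Your treatment of strings with \(Q_B(s)=0\) matches the paper's remark that the sender's response there can be arbitrary; the finiteness reduction you add is harmless but not needed, since the forward construction works for any hidden-variable model.
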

\begin{proof}[Proof for completeness]
A local model gives
\[
 Q(a,s\mid x)=\sum_\lambda p(\lambda)
 p_A(a\mid x,\lambda)\prod_y p_B(s_y\mid y,\lambda).
\]
Conversely, use \(s\) drawn from \(Q_B\) as an input-independent hidden
variable, let the sender use \(Q(a,s\mid x)/Q_B(s)\), and let the
receiver report \(s_y\). The sender's response can be arbitrary where
\(Q_B(s)=0\).
\end{proof}

\begin{recalled}{Theorem}{thm:operationalfrontier}{Characterization for fixed quantum resources}
Under Assumption~\ref{ass:communication}, fix a shared state
\(\rho_{AB}\) and the receiver's measurement families
\(M^{(m)}=(M^{(m)}_{b|y})_{b,y}\), one for each message value
\(m\in\mathcal M\), with \(\abs{\mathcal M}\ge2\).
The finite sender-input and reported-outcome alphabets, the sender's
measurements and the receiver's classical output processing may be chosen.
The following are equivalent:
\begin{enumerate}
 \item A communication experiment using the fixed state and receiver
       measurements can produce probabilities \(p\) with
       \(\Acup(p)>\abs{\mathcal M}\), certifying incompatibility of
       at least one measurement family \(M^{(m)}\).
 \item For some message value \(m\), the receiver's measurements
       \(M^{(m)}\) and a finite family of two-outcome sender POVMs
       \(F_{a|x}\) produce Bell-nonlocal correlations
       \[
       P(a,b\mid x,y)=\tr[(F_{a|x}\otimes M^{(m)}_{b|y})\rho_{AB}],
       \qquad a\in\{0,1\}.
       \]
\end{enumerate}
Whenever these conditions hold, a one-bit protocol achieves
\(\Acup(p)>2\) with the same state, using the receiver's measurements
for the message value in condition~2 and for one other message value.
\end{recalled}

\begin{proof}[Proof of Theorem~\ref{thm:operationalfrontier}]
Write \(\rho_B=\tr_A\rho_{AB}\). For sender POVMs \((A_{m|x})_m\),
the joint probabilities of the message and receiver outcome are
\[
 r(m,b\mid x,y)=\tr[(A_{m|x}\otimes M^{(m)}_{b|y})\rho_{AB}].
\]
Any allowed receiver report is a classical processing of these outcomes.
Independent shared randomness added to the fixed resource does not
change the existence of a violation: by convexity of \(\Acup\), a
violating mixture has a violating component.
It therefore suffices to consider protocols without this additional
randomization.

\emph{\(1\Rightarrow2\).}
Suppose condition~2 fails. For any sender POVMs \((A_{m|x})_m\) and
each message \(m\), consider the binary POVMs
\(\{A_{m|x},\id_A-A_{m|x}\}\), assigning outcome zero to
\(A_{m|x}\). Their correlations with \(M^{(m)}\) are Bell-local
by the failure of condition~2.
Lemma~\ref{lem:localcoupling} gives distributions
\(Q^{(m)}(a,s\mid x)\) with normalized receiver marginals
\(Q_B^{(m)}(s)\), independent of \(x\). Set
\[
 q_m(s\mid x)=Q^{(m)}(0,s\mid x).
\]
The coordinate marginals of \(q_m\) are \(r(m,b\mid x,y)\), and
\(q_m(s\mid x)\le Q_B^{(m)}(s)\). Hence its column maxima sum to
at most one. Define a joint distribution on reported outcome strings by
\[
 q\bigl((m,s_1),\ldots,(m,s_n)\mid x\bigr)=q_m(s\mid x),
\]
and give zero weight to strings with different message labels.
It has marginals \(r(m,b\mid x,y)\) and is normalized because
\(\sum_{m,b}r(m,b\mid x,y)=1\). Hence
\[
 \Acup(r)\le\Lambda(q)
 =\sum_{m,s}\max_x q_m(s\mid x)
 \le\sum_{m,s}Q_B^{(m)}(s)=\abs{\mathcal M}.
\]
Classical processing of the receiver's reported outcome cannot increase
\(\Acup\), by Proposition~\ref{prop:basic}(ii). Thus the compatibility
bound cannot be violated.

\emph{\(2\Rightarrow1\).}
Relabel the message values as \(1,\ldots,\abs{\mathcal M}\), with \(M^{(1)}\)
witnessing condition~2. Let the binary sender POVMs \((F_{a|x})_{a=0,1}\)
produce the Bell-nonlocal correlation
\[
 P(a,b\mid x,y)=\tr[(F_{a|x}\otimes M^{(1)}_{b|y})\rho_{AB}],
 \qquad a\in\{0,1\}.
\]
On these original inputs, perform this POVM and send message \(1\) for
outcome zero and message \(2\) for outcome one, and assign zero
probability to the other messages. For each \(m\), add a new input
\(x_m\) on which the sender sends \(m\) deterministically. Keep the
prescribed receiver measurements and report \((m,b)\). In particular,
\(M^{(2)}\) need not equal \(M^{(1)}\). The resulting probabilities obey
\begin{equation}
\begin{aligned}
 r(1,b\mid x,y)&=P(0,b\mid x,y),\\
 r(m',b\mid x_m,y)&=\delta_{m'm}\tr(\rho_B M^{(m)}_{b|y}).
\end{aligned}
\label{eq:frontierconstruction}
\end{equation}

Suppose \(\Acup(r)\le\abs{\mathcal M}\), and choose a minimizing joint distribution
\(q\in\mathcal C(r)\). Its outcome lists contain reported pairs.
Write \(q_m(s\mid x)\) for its weight on the list
\(((m,s_1),\ldots,(m,s_n))\). These weights need not sum to one over
\((m,s)\), since \(q\) may also assign weight to lists with different
message labels at different receiver settings. At input \(x_m\), the
zero marginals of all other messages force \(q\) to be supported on
the constant-message lists labelled \(m\), so
\(\sum_s q_m(s\mid x_m)=1\). Their disjoint supports imply
\[
 \abs{\mathcal M}\ge\Lambda(q)
 \ge\sum_{m,s}\max_x q_m(s\mid x)
 \ge\sum_{m,s}q_m(s\mid x_m)=\abs{\mathcal M},
\]
where the maxima include every original and additional input.
All omitted terms and all differences
\(\max_x q_m(s\mid x)-q_m(s\mid x_m)\) are nonnegative.
Equality forces them to vanish. Thus \(q\) assigns no weight to lists
with different message labels at different settings, and
\(q_m(s\mid x)\le q_m(s\mid x_m)\) for all \(m,s,x\).

For each original input define
\[
 Q(0,s\mid x)=q_1(s\mid x),\qquad
 Q(1,s\mid x)=q_1(s\mid x_1)-q_1(s\mid x).
\]
These terms are nonnegative and sum to the normalized,
\(x\)-independent distribution \(q_1(s\mid x_1)\).
Their outcome-zero marginals are \(P(0,b\mid x,y)\), and their
outcome-one marginals are
\[
 \tr(\rho_B M^{(1)}_{b|y})-P(0,b\mid x,y)=P(1,b\mid x,y),
\]
by nonsignalling. Lemma~\ref{lem:localcoupling} would make \(P\)
Bell-local, a contradiction. Therefore \(\Acup(r)>\abs{\mathcal M}\).

Finally, repeat
the construction using only message values \(1,2\) and the two additional
inputs \(x_1,x_2\). The same argument gives \(\Acup(r)>2\) on the
same state, using only the two prescribed families of receiver measurements.
\end{proof}

The argument extends to a fixed nonsignalling resource when the allowed
local measurements include deterministic outputs and are closed under
classical processing of outcomes, including adding zero-probability
labels. Additional input labels only select measurements already allowed
on that resource. Under these conditions the construction above uses
only nonsignalling, positivity of probabilities and these classical
operations, so the same equivalence holds without the quantum description.

\section{Proofs for private randomness}
\label{app:privacy}

Here $B$ denotes the classical receiver outcome, with values $b$.
The receiver's complete quantum system is denoted by $R$ to distinguish it
from this outcome. All entropies and logarithms are in bits.

\subsection{Private output entropy}
\label{app:privacy-entropy}

\begin{recalled}{Theorem}{thm:privacy}{Private output entropy}
In the quantum scenario of Section~\ref{sec:privacy}, let
$p=p(m,b\mid x,y)$ be the joint probabilities of the actual message and
receiver outcome. Put
$\delta=[\Acup(p)-\abs{\mathcal M}]_+$.
Then
\[
 H(B\mid EXYM)\ge
 1-h_2\!\left(\frac12+\frac{2\delta}{Nn(n-1)}\right).
\]
Here $H$ is conditional von Neumann entropy and $h_2$ is binary entropy.
In particular, $\Acup(p)>\abs{\mathcal M}$ certifies positive conditional
output entropy.
\end{recalled}

\begin{proof}[Proof of Theorem~\ref{thm:privacy}]
Write
$T(\rho,\sigma)=\frac12\norm{\rho-\sigma}_1$ for trace distance.
The receiver's state, including the received message, is
$\tau_x=\bigoplus_m\sigma_{m\mid x}$, where
$\tr\sigma_{m\mid x}=p(m\mid x)$ and
$\sum_m\sigma_{m\mid x}=\rho_R$ is independent of $x$.
Absorb receiver preprocessing into the measurements and use a common
projective dilation, including its fixed pure ancilla in $\tau_x$.
The projectors $P_{b\mid y}$ preserve each message subspace and leave the
adversary's conditional states unchanged. Define
\[
 \Delta_y(\tau)=\sum_bP_{b\mid y}\tau P_{b\mid y},\qquad
 d=\frac1{Nn}\sum_{x,y}T\bigl(\tau_x,\Delta_y(\tau_x)\bigr).
\]
Thus $d$ is the average change in the receiver's state when a measurement
is performed and its outcome discarded. We first bound the violation in terms of $d$, then relate $d$
to the conditional entropy.

Perform all receiver measurements in the order $1,\ldots,n$ or its
reverse, with equal probability, retaining the message and every
outcome. This defines a parent measurement for each message. Its
coordinate marginals $c(m,b\mid x,y)$ therefore satisfy
$\Acup(c)\le\abs{\mathcal M}$ by Theorem~\ref{thm:communication}(i).
Write
\[
 \varepsilon=\frac1{2Nn}\sum_{x,y,m,b}
       \abs{p(m,b\mid x,y)-c(m,b\mid x,y)}
\]
for their average total-variation distance. The triangle inequality
and trace-distance contractivity bound the disturbance before any
measurement by the sum of the individual disturbances of its
predecessors. In the two reverse orders, each other measurement
precedes it exactly once. Averaging over orders and inputs gives
$\varepsilon\le(n-1)d/2$.

Choose a minimizing joint distribution $q\in\mathcal C(c)$, so that
$\Lambda(q)\le\abs{\mathcal M}$. Each coordinate is a reported pair
$(m,b)$. For each input, adjust one coordinate marginal at a time from
$c$ to $p$ by a maximal coupling. This changes that coordinate with
probability equal to the total-variation distance of its old and new
marginals, while leaving all other coordinate marginals unchanged.
Moving mass $t$ within one input row increases the sum of column maxima
by at most $t$: only increases in the row entries can raise a column
maximum, and their sum is at most the moved mass. The adjusted joint
distribution therefore gives
\[
 \Acup(p)\le\Acup(c)+Nn\varepsilon
 \le\abs{\mathcal M}+Nn\varepsilon.
\]
Taking the positive part yields
\begin{equation}
 \delta\le Nn\varepsilon\le\frac{Nn(n-1)}2\,d.
 \label{eq:privacy-disturbance}
\end{equation}

\Needspace{8\baselineskip}
For the entropy step, consider a normalized state $\tau$ and a binary
projective measurement, temporarily suppressing $x,y$.
Writing $U=P_0-P_1$ gives
$\Delta(\tau)=(\tau+U\tau U)/2$.
Writing $S$ for von Neumann entropy, the increase
$S(\Delta(\tau))-S(\tau)$ is the Holevo
information of the two equiprobable states $\tau$ and $U\tau U$.
Their trace distance is $2T(\tau,\Delta(\tau))$, so
$T(\tau,\Delta(\tau))\le1/2$ and their minimum discrimination
error is $1/2-T(\tau,\Delta(\tau))$
\cite[Theorem~3.4]{PR_Watrous2018}.
Let $Z\in\{0,1\}$ be the uniformly random state label and
$\widehat Z$ the optimal guess. Then
$H(Z\mid\widehat Z)=H(Z\mathbin\oplus\widehat Z\mid\widehat Z)
\le h_2(\Pr[Z\ne\widehat Z])$.
Data processing thus bounds the Holevo information from below by
\[
 S(\Delta(\tau))-S(\tau)
 \ge g\bigl(T(\tau,\Delta(\tau))\bigr),\qquad
 g(t):=1-h_2(1/2+t),\quad 0\le t\le1/2.
\]

Giving the adversary a purification can only reduce conditional
entropy. The measurement-entropy identity of
\cite[Theorem~1(i), Eq.~(12)]{PR_Coles2012}, with the public message
included in the adversary's information, therefore gives
\begin{equation}
 H(B\mid EM,X=x,Y=y)
 \ge S(\Delta_y(\tau_x))-S(\tau_x)
 \ge g\bigl(T(\tau_x,\Delta_y(\tau_x))\bigr).
 \label{eq:privacy-coherence}
\end{equation}
Averaging over the uniform inputs and using convexity and monotonicity
of $g$ yields
\[
 H(B\mid EXYM)\ge g(d)
 \ge 1-h_2\!\left(\frac12+
       \frac{2\delta}{Nn(n-1)}\right),
\]
which proves the theorem.
\end{proof}

The message does not contribute private entropy: the adversary holds a
public copy of $M$. Since $\tau_x$ is already block diagonal in $M$, reading the message
register does not change it. Reading $M$ and performing the binary
receiver measurement, with their outcomes discarded, therefore gives
the same state $\Delta_y(\tau_x)$ as the binary measurement alone.
This is why the binary entropy estimate applies although the test
retains the message.

\subsection{Composable extraction}
\label{app:privacy-composability}

\begin{recalled}{Corollary}{cor:privacy-extraction}{Composable private randomness}
For the protocol in Section~\ref{sec:privacy-sequential}, including its communication, isolation and
disclosure conditions, and for all sufficiently large $L$, there is a
fixed output length $\ell$ satisfying
\[
 \begin{aligned}
  \ell&\ge Lr(w_0)-O(\sqrt L),\\
  r(w_0)&=1-h_2\!\left(\frac12+
       \frac{2w_0}{Nn(n-1)}\right).
 \end{aligned}
\]
The constant in $O(\sqrt L)$ depends only on the fixed alphabets,
coefficients, acceptance threshold and $\varepsilon$, not on
the device dimensions.

Let $\sigma^{\mathrm{acc}}_{KE\mathcal T}$ be the subnormalized final
state on acceptance, whose trace is the acceptance probability, and
let $\sigma^{\mathrm{acc}}_{E\mathcal T}
=\tr_K\sigma^{\mathrm{acc}}_{KE\mathcal T}$.
The extracted string satisfies
\begin{equation*}
 \frac12
 \left\|\sigma^{\mathrm{acc}}_{KE\mathcal T}
 -\frac{\id_K}{2^\ell}\otimes\sigma^{\mathrm{acc}}_{E\mathcal T}\right\|_1
 \le\varepsilon.
\end{equation*}
\end{recalled}

\begin{proof}
For generalised entropy accumulation, take $A_i=B_i$ as the private
output, $C_i=(X_i,Y_i,M_i,B_i)$ as the unpublished test record, and $R_i$
as the receiver memory. The side information $E_i$ contains all sender
and adversary systems and preserved classical copies of $X^i,Y^i,M^i$.
Here $X^i=(X_1,\ldots,X_i)$, and similarly for the other sequences.
These copies account for final disclosure without allowing additional
signals during the experiment. In each round, the trusted controller
generates independent uniform inputs, independently of all incoming
systems and any purifying reference.
The maps $\mathcal M_i:R_{i-1}E_{i-1}\to C_iA_iR_iE_i$ satisfy
\[
 \operatorname{Tr}_{C_iA_iR_i}\circ\mathcal M_i
 =\mathcal R_i\circ\operatorname{Tr}_{R_{i-1}}
\]
for channels $\mathcal R_i:E_{i-1}\to E_i$: after discarding receiver
systems, only independent input generation and sender operations remain.
Reading $A_i$ and the records in $E_i$ recovers $C_i$ without disturbance.
Since the records are preserved, $C^L$ is likewise recoverable from
$A^L E_L$. These are the nonsignalling and reconstruction conditions of
\cite[Eq.~(4.2), Theorem~4.3]{PR_Metger2024}.

For $t\le Nn(n-1)/4$, define
\[
 r(t)=1-h_2\!\left(\frac12+
       \frac{2[t]_+}{Nn(n-1)}\right).
\]
This function is convex and nondecreasing. The disturbance bound
\eqref{eq:privacy-disturbance}, with $d\le1/2$, gives
$\delta\le Nn(n-1)/4$. Together with
\eqref{eq:privacy-witness}, this ensures that every quantum-realizable
$W(p)$ lies in the stated domain. Theorem~\ref{thm:privacy} therefore gives
$H(B\mid EXYM)\ge r(W(p))$, since $[W(p)]_+\le\delta$.
The same argument holds for every incoming round state, with any
purifying reference included in $E$, because the inputs are generated
independently of all these systems.

For a distribution $q$ on the round record $C_i$, put
\[
 f(q)=r(w_0)+r'(w_0)
 \left(Nn\sum_{x,y,m,b}q(x,y,m,b)w_{xymb}-w_0\right).
\]
For a realizable $q$, uniform inputs give
$Nn\sum_{x,y,m,b}q(x,y,m,b)w_{xymb}=W(p)$.
Convexity gives $f(q)\le r(W(p))\le H(B\mid EXYM)$ for every realizable
$q$, so $f$ is an affine min-tradeoff function
\cite[Definition~4.2]{PR_Metger2024}. It is defined for all $q$, with
finite slope by the strict upper bound on $w_0$. For the empirical
distribution $q$ of an accepted sequence, $f(q)\ge r(w_0)$.

Let $p_{\mathrm{acc}}=\tr\sigma^{\mathrm{acc}}_{KE\mathcal T}$ and
$\varepsilon_s=\varepsilon/8$. We write $H_{\min}^{\varepsilon_s}$
for conditional min-entropy smoothed in purified distance.
If $p_{\mathrm{acc}}\ge\varepsilon/2$, let $\rho_{\mathrm{acc}}$
denote the normalized state of $B^L E_L$ after the rounds,
conditioned on acceptance.
Theorem~4.3 of \cite{PR_Metger2024}, with R\'enyi parameter
$\alpha=1+L^{-1/2}$, gives
\[
 H_{\min}^{\varepsilon_s}(B^L\mid E_L)_{\rho_{\mathrm{acc}}}
 \ge Lr(w_0)-O(\sqrt L).
\]
The constant depends only on the fixed alphabets, coefficients,
acceptance threshold and $\varepsilon$, since the output is binary and
$f$ has fixed finite range. The acceptance-dependent correction can be bounded using
$p_{\mathrm{acc}}\ge\varepsilon/2$. The final counts, or any function of them, take at most
$(L+1)^{2Nn\abs{\mathcal M}}$ values. The chain rule for smooth
min-entropy under disclosure of classical information
\cite[Proposition~5.10, Eq.~(5.9)]{PR_Tomamichel2012Thesis} therefore
subtracts at most $2Nn\abs{\mathcal M}\log_2(L+1)$ from the bound,
without changing $\varepsilon_s$. Inputs and messages are already
included in $E_L$.

Let $h$ denote the resulting uniform lower bound on smooth min-entropy
after these disclosures. Choose
$\ell=\lfloor h-2\log_2(2/\varepsilon)\rfloor$, which is nonnegative
for sufficiently large $L$ and does not depend on the observed data.
Conditioned on acceptance, two-universal hashing with an independent
public seed sampled after data generation gives trace distance at most
\[
 2\varepsilon_s+\frac12\,2^{(\ell-h)/2}\le\frac{\varepsilon}{2}
\]
from the product of the uniform state on $K$ and the actual marginal
state of the adversary and public transcript
\cite[Theorem~6 and proof of Lemma~4]{PR_Tomamichel2011}.
The term $2\varepsilon_s$ accounts for smoothing the state and its marginal.
Multiplying by $p_{\mathrm{acc}}$ proves \eqref{eq:privacy-security}
for $p_{\mathrm{acc}}\ge\varepsilon/2$. Otherwise, the subnormalized
distance is at most $p_{\mathrm{acc}}<\varepsilon/2$, since both
states have that trace, including when $p_{\mathrm{acc}}=0$.

\end{proof}

\Needspace{24\baselineskip}
\section{Quantum limitation of the nonsignalling converse}
\label{app:quantum-converse-gap}

\subsection{A quantum counterexample}

\begin{proposition}[Quantum limitation of the converse]
\label{prop:quantum-converse-gap}
For \(x,y\in\{1,2,3\}\) and \(b\in\{0,1\}\), define
\[
 p(1\mid x,y)=
 \begin{cases}
  4/5,&x=y,\\
  1/5,&x\ne y,
 \end{cases}
 \qquad p(0\mid x,y)=1-p(1\mid x,y).
\]
Then the following hold.
\begin{enumerate}
\item[(i)] \(\Acup(p)=2\).
\item[(ii)] \(p\) admits a realization with shared entanglement and one
classical bit.
\item[(iii)] In every quantum realization of \(p\) with one classical
bit, the receiver's measurements are incompatible for at least one
message value, regardless of the shared state or its dimension.
\end{enumerate}
\end{proposition}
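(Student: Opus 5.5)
The plan is to treat the three parts separately. Part~(i) is a finite linear program that we solve exactly. Part~(ii) is an explicit qubit construction. Part~(iii) reuses the coupling argument from the proof of Theorem~\ref{thm:operationalfrontier} to obtain a \emph{per-message} version of the LP~\eqref{eq:lp}, and then shows that this version is infeasible for \(p\).

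For (i), the upper bound \(\Acup(p)\le2\) comes from an explicit feasible point of \eqref{eq:lp} on \(\Omega=\{0,1\}^3\). For \(x=1\) take
\[
 q(100\mid1)=\tfrac25,\quad q(110\mid1)=q(101\mid1)=q(000\mid1)=\tfrac15,
\]
and define the other inputs by permuting coordinates. The marginals are \(4/5\) on the diagonal and \(1/5\) off it. The column maxima are \(2/5\) on each weight-one string, \(1/5\) on each weight-two string and \(1/5\) on \(000\), which sum to \(2\). For the lower bound \(\Acup(p)\ge2\) we need a dual-feasible \(w\) in \eqref{eq:dualcompact} with value \(2\). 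The single-setting bound of Proposition~\ref{prop:basic}(i) gives only \(8/5\), so it does not suffice. Because the problem is invariant under simultaneous permutations of \(x\) and \(y\), we may use the symmetric ansatz \(w_{xy1}=\alpha\) if \(x=y\), \(w_{xy1}=\beta\) if \(x\ne y\), and \(w_{xy0}=\gamma\). The constraint then reduces to four inequalities, one for each Hamming weight of \(s\). We choose \(\alpha,\beta,\gamma\) so that the witness value \(3(\tfrac45\alpha+\tfrac{2}{5}\beta+\gamma)\) equals \(2\), and verify the four inequalities directly.

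For (ii), share \(\ket{\Phi^+}\) and use three trine directions \(n_1,n_2,n_3\) in the \(XZ\)-plane, with \(n_x\cdot n_y=-\tfrac12\) for \(x\ne y\). The sender measures \(\sigma\cdot n_x\) and sends the outcome \(a\) as the bit. The receiver measures \(\sigma\cdot n_y\) and obtains \(c\). Since \(\Pr[c=a]=(1+n_x\cdot n_y)/2\), this probability is \(1\) for \(x=y\) and \(1/4\) otherwise. With probability \(4/5\) the receiver outputs \(b=\mathbf 1\{c=a\}\), and otherwise it outputs \(b=0\). This gives \(4/5\) on the diagonal and \(\tfrac45\cdot\tfrac14=\tfrac15\) off it, as required.

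For (iii), suppose that in some quantum one-bit realization the receiver measurements \(M^{(m)}\) are compatible for both \(m\). Compatible receiver measurements are Bell-local against every sender POVM: the parent outcome is a local hidden variable. The \(1\Rightarrow2\) step of the proof of Theorem~\ref{thm:operationalfrontier} therefore applies. It yields \(q_m(s\mid x)\ge0\) and probability distributions \(Q^{(m)}_B\) on \(\Omega\) such that
\[
 p(b\mid x,y)=\sum_m\sum_{s:s_y=b}q_m(s\mid x),\qquad
 q_m(s\mid x)\le Q^{(m)}_B(s).
\]
The final output processing, which may use \(m\), is absorbed as in Proposition~\ref{prop:basic}(ii). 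This is stronger than \(\Acup(p)\le2\), because each \(t^{(m)}=Q_B^{(m)}\) must separately sum to one. It is again a finite LP in the variables \(q_1,q_2,t^{(1)},t^{(2)}\). The remaining task, and the main obstacle, is to show this LP is infeasible for \(p\) by exhibiting a Farkas certificate. To find one, I would first symmetrize under the simultaneous \(S_3\) action on \(x\) and \(y\), and also under the swap of the two message labels. The certificate should consist of coefficients \(w_{xyb}\) together with per-message normalization multipliers \(\mu_1,\mu_2\) such that \(\sum_{x,y,b}w_{xyb}p(b\mid x,y)>\mu_1+\mu_2\), while
\[
 \sum_x\Bigl[\sum_y w_{xy\,s_y}\Bigr]_+\le\mu_m\quad\text{for all } s \text{ and } m.
\]
Intuitively, the two \(t^{(m)}\) cannot split the support structure found in (i) into two normalized pieces. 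If the symmetric ansatz fails, I would solve the small LP numerically, rationalize the dual solution, and verify it exactly. The conclusion then holds for every shared state and dimension, because only the classical coupling structure is used.
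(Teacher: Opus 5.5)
Parts (i) and (ii) match the paper. Your feasible point is the paper's, and your qubit protocol is the paper's: its effect $\frac25(\id+m\,\mathbf u_y\cdot\boldsymbol\sigma)$ is $\frac45$ times the projector onto outcome $m$. Your symmetric ansatz also contains the paper's witness, namely $\gamma=\frac19$, $\alpha=\frac79$, $\beta=-\frac7{18}$, which gives $\sum_y w_{xy\,s_y}=\frac13+\frac23 s_x-\frac12\sum_{y\ne x}s_y$. One slip: the witness value is $3(\frac45\alpha+\frac25\beta+\frac95\gamma)$, because each off-diagonal pair also contributes $\frac45\gamma$.

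Part (iii) cannot work as planned, because the per-message LP you intend to refute is feasible. Your own certificate form shows this. The constraint $\sum_x[\sum_y w_{xy\,s_y}]_+\le\mu_m$ does not depend on $m$, so dividing $w$ by $\min_m\mu_m$ gives a feasible point of \eqref{eq:dualcompact} with value above $2$. That would mean $\Acup(p)>2$, contradicting (i). Explicitly, take the optimal $q$ from (i) with column maxima $t$. Let $Q^{(1)}$ be $t$ restricted to $\{000,e_1,e_2\}$ and $Q^{(2)}$ be $t$ restricted to the rest of its support. Both sum to one, and $q_m$, the restriction of $q$ to the support of $Q^{(m)}$, satisfies every constraint. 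The message probabilities $\sum_s q_m(s\mid x)$ are unconstrained, so any $t$ with $\sum_s t_s\le2$ splits this way. Your per-message condition is therefore not stronger than $\Acup(p)\le2$.

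This failure is structural. Every nonsignalling one-bit realization with compatible receiver measurements satisfies your per-message constraints, and Theorem~\ref{thm:communication}(ii) provides such a realization of this $p$. Hence no argument that uses only the classical coupling structure can prove (iii); that separation is the point of the proposition.

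The missing ingredient is quantum. The paper writes $q(s\mid x)=\tr(E_xT_s)+\tr((\id-E_x)U_s)\le r_s:=\tr(T_s+U_s)$, where $T_s,U_s$ are the sender operators steered by the two parents, each family summing to $\rho_A$. Tightness of $2=W(p)\le\Lambda(q)\le\sum_s r_s=2$ forces $\max_x q(s\mid x)=r_s$. Through the negative witness coefficients, it also forces $q(000\mid x)=\frac15=r_{000}$ for all $x$. Equality then gives the operator identities $E_xT_s=T_s$ and $E_xU_s=0$ at a maximizing $x$, and for $s=000$ at every $x$. Consequently $T_{000}U_s=U_{000}T_s=0$, hence $T_{000}\rho_A=U_{000}\rho_A=0$. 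The positive operators $T_{000},U_{000}\preceq\rho_A$ must then vanish, contradicting $r_{000}=\frac15$. These operator products have no counterpart in a general nonsignalling theory, and nothing in your plan plays their role.
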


\begin{proof}[Proof of Proposition~\ref{prop:quantum-converse-gap}]
We first realize \(p\) using shared entanglement and one classical bit.
Choose three unit Bloch vectors \(\mathbf u_x\) in the \(XZ\) plane
with \(\mathbf u_x\cdot\mathbf u_y=-1/2\) for \(x\ne y\), and
share \(\ket{\Phi^+}=(\ket{00}+\ket{11})/\sqrt2\).
The sender measures \(\mathbf u_x\cdot\boldsymbol\sigma\), where
\(\boldsymbol\sigma\) is the vector of Pauli matrices, and sends
the outcome \(m\in\{-1,1\}\). The receiver uses the effects
\[
 M^{(m)}_{1|y}=\frac25\bigl(\id+m\,\mathbf u_y\cdot\boldsymbol\sigma\bigr),
 \qquad M^{(m)}_{0|y}=\id-M^{(m)}_{1|y}.
\]
The first effect has eigenvalues zero and \(4/5\), so these are valid
POVMs. The conditional receiver state has Bloch vector \(m\mathbf u_x\),
giving \(p(1\mid x,y)=\frac25(1+\mathbf u_x\cdot\mathbf u_y)\).

To evaluate \(\Acup(p)\), consider any joint distribution
\(q\in\mathcal C(p)\), with \(s\in\{0,1\}^3\), and set
\(t_s=\max_x q(s\mid x)\). Define
\[
 W(p)=1+\frac23\sum_xp(1\mid x,x)
             -\frac12\sum_{x\ne y}p(1\mid x,y),
 \qquad
 c_x(s)=\frac13+\frac23s_x-\frac12\sum_{y\ne x}s_y.
\]
Normalization and the marginal constraints give
\[
 W(p)=\sum_{x,s}c_x(s)q(s\mid x)
 \le\sum_s t_s\sum_x\max\{c_x(s),0\}
 \le\sum_s t_s.
\]
The first inequality discards negative terms and uses
\(q(s\mid x)\le t_s\). If \(s\) contains zero, one, two or three
ones, the sum of positive coefficients is respectively \(1,1,1,0\),
which proves the second inequality. The stated probabilities give
\(W(p)=2\). Conversely, the joint distribution
\[
 q(000\mid x)=\frac15,\qquad
 q(e_x\mid x)=\frac25,\qquad
 q(e_x+e_y\mid x)=\frac15\quad(y\ne x),
\]
with all other entries zero, has the required marginals and satisfies
\(\Lambda(q)=1/5+3(2/5)+3(1/5)=2\).
Here \(e_x\) is the binary string whose only nonzero coordinate is
\(x\). This proves \(\Acup(p)=2\).

Suppose a compatible quantum realization with one bit existed. Let
\(E_x\) be the sender's effect for message zero. Measuring the receiver's
parent for message zero or one prepares subnormalized sender states
\(T_s\) or \(U_s\), respectively, with
\(\sum_sT_s=\sum_sU_s=\rho_A\). The induced joint distribution obeys
\[
 q(s\mid x)=\tr\!\bigl(E_xT_s+(\id-E_x)U_s\bigr)
 \le r_s:=\tr(T_s+U_s),\qquad \sum_s r_s=2.
\]
All inequalities in
\(2=W(p)\le\Lambda(q)\le\sum_s r_s=2\) are therefore equalities.
In particular, \(\max_xq(s\mid x)=r_s\) for every \(s\).
Equality in the witness bound also forces \(q(s\mid x)=0\) whenever
\(s\ne000\) and \(s_x=0\), because the corresponding coefficient
\(c_x(s)\) is negative. Hence
\[
 q(000\mid x)=1-p(1\mid x,x)=\frac15=r_{000}
 \qquad\text{for every }x.
\]

For any input attaining \(q(s\mid x)=r_s\), positivity gives
\[
 E_xT_s=T_s,\qquad E_xU_s=0.
\]
Indeed, equality makes both nonnegative terms
\(\tr((\id-E_x)T_s)\) and \(\tr(E_xU_s)\) vanish.
For \(s=000\), these identities hold for every input. Choosing a
maximizing input for each other \(s\) therefore gives
\(T_{000}U_s=U_{000}T_s=0\). Summing over \(s\) yields
\(T_{000}\rho_A=U_{000}\rho_A=0\).
Since both \(T_{000}\) and \(U_{000}\) are positive and bounded above
by \(\rho_A\), they must vanish, contradicting \(r_{000}=1/5\).
The argument does not restrict the shared state or its dimension.

\end{proof}

\end{document}